\documentclass[aps,prl,reprint,amsmath,amssymb,nofootinbib,longbibliography]{revtex4-2}

\usepackage{amsthm,bm,braket,graphicx}
\usepackage[colorlinks=true,linkcolor=blue,citecolor=blue]{hyperref}

\newtheorem{axiom}{Axiom}
\newtheorem{theorem}{Theorem}

\newtheorem{corollary}{Corollary}
\theoremstyle{definition}

\newcommand{\Sgroup}{\mathcal{S}}
\newcommand{\Ngroup}{\mathcal{N}}
\newcommand{\Code}{\mathcal{C}}
\newcommand{\Wop}{\hat{W}}
\newcommand{\Urel}{U_{\mathrm{rel}}}
\newcommand{\CU}{\mathcal{U}}
\newcommand{\CV}{\mathcal{V}}          % <-- was missing in v2/v3
\newcommand{\CE}{\mathcal{E}}
\newcommand{\Tbar}{\bar{T}}

\newcommand{\Id}{\mathbb{I}}
\newcommand{\id}{\mathrm{id}}
\newcommand{\Sbar}{\bar S}
\newcommand{\Zbar}{\bar{Z}}
\newcommand{\Xbar}{\bar{X}}
\newcommand{\Tr}{\mathrm{Tr}}
\newcommand{\Mbar}{\overline{M}}

\begin{document}
%
% TITLE SHORTLIST (recommendation first):
%   Quantum Error-Corrected Memories Keep Proper Time  <-- CHOSEN
%     'Quantum' is load-bearing: ECC memory is standard classical-computing
%     terminology, so the shorter form is ambiguous in a reference list.
%     Names Thm 1, which is the load-bearing citation for downstream work
%     (encoding-superposition protocols cite the license, not the gate).
%   Proper Time in Error-Corrected Quantum Memories    (maximally neutral fallback)
%   Proper Time as a Logical Gate                      (names Cor. 6; narrower)
%   Quantum Error Correction Is Blind to Proper Time   (sharper, but the negation
%     can be misread as "you cannot measure proper time with logical qubits")
%   A Worldline Excursion Is a Logical Gate            (most distinctive, least findable)
\title{Quantum Error-Corrected Memories Keep Proper Time}

\author{Sebastian P.~Kish}
\affiliation{Centre for Quantum Software and Information, University of Technology Sydney, NSW 2007, Australia}

\author{Simon J.~Devitt}
\affiliation{Centre for Quantum Software and Information, University of Technology Sydney, NSW 2007, Australia}

\date{\today}

\begin{abstract}
What does quantum error correction do to relativistic proper time? We show that when two error-corrected quantum memories follow distinct worldlines and are controlled by local references, their local correction histories factor \emph{exactly} from the proper-time difference between them. Error correction therefore does not erase time dilation: when the memories are compared, the accumulated difference appears as a relational phase rotation. During locally referenced operation this rotation is invisible to every local error syndrome; at discrete phases admitted by a code, the final relational evolution instead returns exactly as a deterministic logical gate, including an $S$-type Clifford gate at a $\pi/2$ excursion. These results connect fault-tolerant quantum information with relativistic clocks and establish protected quantum memories as a platform for preserving proper-time information and for long-coherence tests of quantum dynamics involving relativistic proper time.
\end{abstract}

\maketitle
\emph{Introduction.}---Proper time is routinely read from physical qubits: since Pound and Rebka~\cite{PoundRebka60} the redshift has been resolved across $33$~cm~\cite{Chou10}, across the $450$~m of the Tokyo Skytree~\cite{Takamoto20}, and across a millimetre of one atomic sample~\cite{Bothwell22}. A parallel line places the clock itself in superposition, so that proper time becomes which-path information rather than a parameter~\cite{ZCPB11,Pikovski15,Loriani19,Roura20}, with recent ion-clock proposals extending this regime~\cite{Sorci26}; entangled clocks distributed over a quantum network sharpen the test~\cite{Borregaard2025}. Distributed atomic processors and quantum memories have likewise been proposed as probes of differential proper time~\cite{Covey2025,Barzel2024,Gundogan2026}, while long-baseline single-photon interferometry is approaching the sensitivity required for laboratory gravitational-phase tests~\cite{Yu2026}. Throughout, the record is a relative phase carried by a bare transition or unencoded memory, so such comparisons remain bounded by physical coherence.

Extending such comparisons beyond the physical-memory coherence time naturally suggests an error-corrected memory. But the phase a clock accumulates is itself a collective $Z$ rotation, of the kind an error-correcting code may be designed to suppress. This raises a question the physical-qubit experiments cannot pose: what does error correction do to proper time, and is the signal corrected away with the noise? For clock synchronization between parties at rest, Preskill answered it, and negatively~\cite{Preskill00}: the natural precession is what a phase-error-correcting code is built to resist, so stabilizing the pair freezes it, and what survives is an offset the parties could have read from their own clocks.

This Letter considers the distinct-worldline setting: each oscillator shares a worldline with the qubits it drives, so neither register is detuned from its own reference and there is nothing to freeze. What accumulates is a relation between frames. Under locally referenced control, error correction cannot touch this relation, exactly and independently of the code; the same blindness prevents a protected code from registering it at first order. Thus the relation is a syndrome-invisible relational error at small $\delta\theta$, while at discrete phases admitted by a code it is a deterministic logical gate.

\emph{Setup.}---In the momentarily comoving rest frame a clock transition is generated by $H=\tfrac{\hbar\omega_0}{2}Z$, with $\hbar$ the reduced Planck constant, $Z$ the Pauli-$Z$ operator, and $\omega_0$ the \emph{proper} transition frequency, the same on every worldline by local position invariance. The state evolves in the qubit's own proper time,
\begin{equation}
  i\hbar\,\partial_\tau\ket{\varphi}=H\ket{\varphi}
  \;\Longrightarrow\;
  \ket{\varphi(\tau)}=R_Z(\omega_0\tau)\ket{\varphi(0)},
  \label{eq:precess}
\end{equation}
$R_Z(\theta)\equiv e^{-i\theta Z/2}$. 

Parametrise two worldlines by a common coordinate time $t$ via $\tau_M(t)=\int_0^t\!\sqrt{-g_{\mu\nu}\dot x_M^\mu\dot x_M^\nu}\,dt'/c$, $M\in\{A,B\}$, where $g_{\mu\nu}$ is the spacetime metric, a dot denotes $d/dt$, and $c$ is the speed of light. For a static weak field, with local gravitational acceleration $g$, height $h_M$ and speed $v_M$, $d\tau_M/dt=1+gh_M/c^2-v_M^2/2c^2+\cdots$. Define $\Delta\tau\equiv\tau_B-\tau_A$ and $\Delta h\equiv h_B-h_A$. Only $\Delta\tau$ enters below; potential and motion contribute alike, and the quantitative examples in this Letter use the former. Let each memory be $n$ physical qubits sharing worldline $\gamma_M$ and one oscillator, ions in a single trap being the concrete case.
% SPK: intra-register vertical-extent footnote cut. Recoverable:
% \footnote{A register shares a worldline only to the accuracy with which its qubits do. A vertically oriented millimetre-scale string spans $5.5\times10^{-9}$~rad\,s$^{-1}$ of proper-frequency difference end to end, or $7\times10^{-6}$~rad over the $T_\pi=1267$~s hold at $\Delta h=450$~m, an infidelity $\sim10^{-11}$ before correction. Unlike $\Urel$ this residue is \emph{differential} and therefore detectable, being ordinary single-qubit dephasing, and horizontal orientation removes it entirely. We take $\theta_M$ single-valued throughout.}
With $\theta_M(t)\equiv\omega_0\tau_M(t)$ and $\Wop\equiv\sum_{i=1}^n Z_i$, with $\Wop_M$ its copy on register $M$, Eq.~\eqref{eq:precess} gives
\begin{equation}
  U_{\mathrm{lab}}(t)=U_A(t)\otimes U_B(t),
  \quad
  U_M(t)=e^{-i\theta_M(t)\Wop_M/2},
  \label{eq:ulab}
\end{equation}
and $\delta\theta(t)\equiv\theta_B-\theta_A=\omega_0\Delta\tau(t)$. Over a hold of duration $T$, both $\theta_M$ are of order $\omega_0T\sim10^{12}$~rad; only the difference is physical. That the common mode drops out is not obvious, since the logical states of a general code are not $\Wop$ eigenstates and $e^{-i\theta\Wop/2}$ is then not a global phase. For a density operator $\rho$, write $\CU_M(t)(\rho)=U_M\rho U_M^\dagger$ and $\CU_{\mathrm{lab}}=\CU_A\otimes\CU_B$. We use $\Id_M$ and $\id_M$ for the identity operator and identity channel on register $M$, respectively.

\emph{What a local operation is.}---An operation has an equatorial axis, set by an oscillator. The $Z$ eigenbasis is fixed by the physics, but the phase of $\ket{1}$ relative to $\ket{0}$ is a free relabelling by an arbitrary angle $\chi$, under which $X\mapsto\cos\chi\,X+\sin\chi\,Y$: the Pauli $X$ is an operator together with a phase reference, and one must say which. That two parties measuring $X$ must first agree that reference was noted in Ref.~\cite{Preskill00}.

\begin{axiom}[Local control reference]\label{ax:comoving}
Control operations on register $M$, comprising encoding, syndrome extraction, correction and readout, are generated on $\gamma_M$ and referenced to an oscillator on $\gamma_M$ phase-locked to the local clock transition. A \emph{local control operation on $M$ at coordinate time $t$} is therefore
\begin{equation}
  \CE^{(t)}_M=\CU_M(t)\circ\CE\circ\CU_M(t)^{-1},
  \label{eq:axiom}
\end{equation}
with $\CE$ any completely positive trace-preserving (CPTP) map supported on $M$'s register and on ancillas co-located on $\gamma_M$, extended by $\id_{\Mbar}$ to the joint space, where $\bar M$ denotes the other register.
\end{axiom}

The axiom is about apparatus, under the experimenter's control: the synthesiser on $\gamma_M$ is locked to the qubits it drives, which \emph{defines} $M$'s rotating frame, and any static offset is common to the register and absorbed into $\theta_M$. Noise is not, and the following extends the axiom to it.

Standard longitudinal dephasing is frame-covariant trivially. Appendix~\ref{app:rwa} shows that, to leading rotating-wave order, near-resonant transverse semiclassical noise likewise has the covariance form of Axiom~\ref{ax:comoving}; the treatment of the residual event offset is given in Appendix~\ref{app:offsets}. Thus Theorem~\ref{thm:reduction} applies to histories of local frame-covariant operations and errors satisfying Axiom~\ref{ax:comoving}. The resulting frame reduction is illustrated in Fig.~\ref{fig:frames}.

\begin{figure}[t]
  \centering
  \includegraphics[width=\columnwidth]{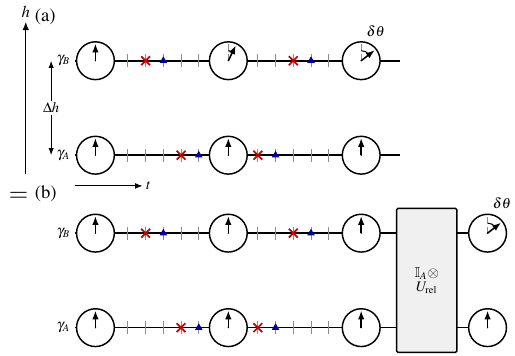}
  \caption{(a) Locally corrected memories at heights differing by $\Delta h$ accumulate only the relative phase $\delta\theta(t)=\omega_0\Delta\tau(t)$. Ticks are syndrome rounds; crosses and triangles are local errors and corrections. (b) Theorem~\ref{thm:reduction}: the same local circuit, rotation-free, followed by $\Id_A\otimes\Urel$.}
  \label{fig:frames}
\end{figure}

\emph{Frame reduction.}---
\begin{theorem}\label{thm:reduction}
Let $\CE^{(t_1)}_{M_1},\dots,\CE^{(t_m)}_{M_m}$ be local operations in the sense of Axiom~\ref{ax:comoving}, $M_a\in\{A,B\}$ for $a=1,\ldots,m$, $0\le t_1\le\cdots\le t_m\le T$, interleaved with the free evolution~\eqref{eq:ulab}, and let $\Lambda$ be the resulting lab-frame channel. Let readout at $T$ be performed with $A$'s oscillator, i.e.\ preceded by $\CV(T)^{-1}$, where $V(T)=R_Z(\theta_A(T))^{\otimes2n}$ on every qubit of both registers and $\CV(T)(\rho)=V(T)\rho V(T)^\dagger$. Then
\begin{equation}
  \CV(T)^{-1}\!\circ\Lambda
  =\big(\id_A\otimes\CU_{\mathrm{rel}}\big)\circ\CE_{M_m}\!\circ\cdots\circ\CE_{M_1},
  \label{eq:reduction}
\end{equation}
where $\Urel=e^{-i\delta\theta(T)\Wop_B/2}$, $\CU_{\mathrm{rel}}(\rho)=\Urel\rho\Urel^\dagger$, $\id_A$ is the identity channel on register $A$, and the $\CE_{M_a}$ are the bare, frame-independent maps of Eq.~\eqref{eq:axiom} with all conjugations stripped.
\end{theorem}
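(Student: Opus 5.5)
The approach is a telescoping argument in an interaction picture. I would write the lab-frame channel explicitly as
\[
\Lambda=\CU_{\mathrm{lab}}(T,t_m)\circ\CE^{(t_m)}_{M_m}\circ\CU_{\mathrm{lab}}(t_m,t_{m-1})\circ\cdots\circ\CE^{(t_1)}_{M_1}\circ\CU_{\mathrm{lab}}(t_1,0),
\]
where $\CU_{\mathrm{lab}}(t,s)=\CU_{\mathrm{lab}}(t)\circ\CU_{\mathrm{lab}}(s)^{-1}$. This form is valid because the generators $\Wop_A,\Wop_B$ commute, so the propagators of Eq.~\eqref{eq:ulab} compose additively in $\theta_M$. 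The state is prepared at $t=0$, where $\theta_M(0)=0$, so $\CU_{\mathrm{lab}}(0)=\id$.

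The first step is to absorb each conjugation in Axiom~\ref{ax:comoving} into the neighbouring free evolutions. Since $\CE^{(t_a)}_{M_a}=\CU_{M_a}(t_a)\circ\CE_{M_a}\circ\CU_{M_a}(t_a)^{-1}$ acts as $\id$ on the other register $\Mbar_a$, it commutes with $\CU_{\Mbar_a}(t_a)$. I can therefore write
\[
\CE^{(t_a)}_{M_a}=\CU_{\mathrm{lab}}(t_a)\circ\CE_{M_a}\circ\CU_{\mathrm{lab}}(t_a)^{-1}.
\]
This is the one place where locality matters: the bare map $\CE_{M_a}$, together with its co-located ancillas, touches only register $M_a$, so the other register's rotation passes through it. For ancillas on $\gamma_M$, I would extend $\Wop_M$ to include them, or note that they are prepared and discarded inside $\CE$, so the identity still holds on the system.

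Substituting this into $\Lambda$, each adjacent pair $\CU_{\mathrm{lab}}(t_a)^{-1}\circ\CU_{\mathrm{lab}}(t_a,t_{a-1})\circ\CU_{\mathrm{lab}}(t_{a-1})$ collapses to the identity. The whole product telescopes to
\[
\Lambda=\CU_{\mathrm{lab}}(T)\circ\CE_{M_m}\circ\cdots\circ\CE_{M_1}.
\]
Ordering ties $t_a=t_{a+1}$ cause no trouble, because the intervening evolution is then trivial.

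The final step is the readout frame. On register $A$, $V(T)$ restricts to $e^{-i\theta_A(T)\Wop_A/2}=U_A(T)$. On register $B$ it is $e^{-i\theta_A(T)\Wop_B/2}$, and since all these operators are functions of the commuting $\Wop_B$, I get
\[
V(T)^{\dagger}\,U_{\mathrm{lab}}(T)=\Id_A\otimes e^{-i(\theta_B-\theta_A)\Wop_B/2}=\Id_A\otimes\Urel.
\]
Hence $\CV(T)^{-1}\circ\CU_{\mathrm{lab}}(T)=\id_A\otimes\CU_{\mathrm{rel}}$, which gives Eq.~\eqref{eq:reduction}.

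I expect the main obstacle to be the bookkeeping rather than any single computation. Three points need care. First, the local maps conjugate only by $\CU_{M_a}$, not by the full lab propagator, and the extension to $\CU_{\mathrm{lab}}$ must be justified precisely by the $\id_{\Mbar}$ tensor structure. Second, the ancilla spaces have to be handled so that the conjugations remain consistent. Third, the result is an exact channel identity with no assumption on the code: the common-mode phase $\theta_A(T)$ is removed by $\CV(T)^{-1}$, not by any property of the logical states.
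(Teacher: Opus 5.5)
Your proposal is correct and is essentially the paper's proof. The paper likewise passes to the interaction picture via $\CU_{\mathrm{lab}}(t)^{-1}$, uses the $\id_{\Mbar}$ tensor structure to obtain the time-independent identity $\CU_{\mathrm{lab}}(t)^{-1}\circ\CE^{(t)}_M\circ\CU_{\mathrm{lab}}(t)=\CE\otimes\id_{\Mbar}$, and finishes with $\CV(T)^{-1}\circ\CU_{\mathrm{lab}}(T)=\id_A\otimes\CU_{\mathrm{rel}}$; your explicit telescoping (using $\CU_{\mathrm{lab}}(0)=\id$ and the commuting generators) and your remarks on ancillas and coincident times just spell out details the paper states in one line.
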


\begin{proof}
Define the interaction-picture state $\tilde\rho(t)\equiv\CU_{\mathrm{lab}}(t)^{-1}[\rho(t)]$. In this picture Eq.~\eqref{eq:axiom} gives
\begin{equation}
  \CU_{\mathrm{lab}}(t)^{-1}\!\circ\CE^{(t)}_M\circ\CU_{\mathrm{lab}}(t)
  =\CE\otimes\id_{\Mbar},
\end{equation}
independent of $t$. Hence the interaction-picture evolution is
$\CE_{M_m}\circ\cdots\circ\CE_{M_1}$. Transforming back at $T$ and comparing
in $A$'s frame gives
$\CV(T)^{-1}\circ\CU_{\mathrm{lab}}(T)=\id_A\otimes\CU_{\mathrm{rel}}$.
\end{proof}

Equation~\eqref{eq:reduction} derives, rather than postulates, the $\id\otimes\Urel$ model: the common mode cancels because each register and its oscillator share a worldline. $A$'s oscillator specifies only the \emph{final comparison frame}; the registers may be reunited, or an equivalent phase reference transferred, after storage. During the hold each local error-correction circuit uses only its local oscillator. A continuously shared oscillator is the distinct common-reference protocol discussed below.

\begin{corollary}[Syndrome blindness]\label{cor:blind}
Let $\CE$ be a projective measurement of any stabilizer $S\in\Sgroup_M$, the local stabilizer group of register $M$, in $M$'s own frame at any time $t$. For outcome $\xi=\pm1$ let $\Pi_\xi=(\Id+\xi S)/2$. Its outcome distribution is $\Tr[\Pi_\xi\tilde\rho(t)]$, independent of $\theta_A$, $\theta_B$ and $\delta\theta$.
\end{corollary}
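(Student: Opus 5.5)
The plan is to compute the outcome probability in the lab frame and push it into the interaction picture of Theorem~\ref{thm:reduction}. A local measurement of $S$ on $M$ at time $t$, in the sense of Axiom~\ref{ax:comoving}, acts on the lab-frame state $\rho(t)$ through the frame-conjugated projectors $\Pi^{(t)}_\xi=U_M(t)\,\Pi_\xi\,U_M(t)^\dagger$, extended by $\Id_{\Mbar}$. This is the Kraus form of $\CU_M(t)\circ\CE\circ\CU_M(t)^{-1}$ with $\CE(\rho)=\sum_\xi \Pi_\xi\rho\Pi_\xi$ together with its classical outcome register. The probability of outcome $\xi$ is therefore
\begin{equation}
p_\xi=\Tr\!\big[(U_M\Pi_\xi U_M^\dagger\otimes\Id_{\Mbar})\,\rho(t)\big].
\end{equation}

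The key step is to factor the lab unitary. Since $U_{\mathrm{lab}}(t)=U_A(t)\otimes U_B(t)$ and $U_{\Mbar}(t)$ commutes with any operator supported on $M$, we may write $U_M\Pi_\xi U_M^\dagger\otimes\Id_{\Mbar}=U_{\mathrm{lab}}(t)\,(\Pi_\xi\otimes\Id_{\Mbar})\,U_{\mathrm{lab}}(t)^\dagger$. Cyclicity of the trace then gives $p_\xi=\Tr[\Pi_\xi\,\tilde\rho(t)]$, with $\tilde\rho(t)=\CU_{\mathrm{lab}}(t)^{-1}[\rho(t)]$.

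It remains to show that $\tilde\rho(t)$ carries no dependence on $\theta_A$, $\theta_B$ or $\delta\theta$. Here we invoke the proof of Theorem~\ref{thm:reduction}, applied with final time $t$ rather than $T$. In the interaction picture, every prior local operation and every frame-covariant error reduces to its bare map $\CE\otimes\id_{\Mbar}$, and the free evolution~\eqref{eq:ulab} is absorbed entirely. Hence $\tilde\rho(t)=\CE_{M_k}\circ\cdots\circ\CE_{M_1}(\tilde\rho(0))$, where the product runs over all operations with $t_a\le t$. If an earlier operation was itself a measurement, we condition on its recorded outcomes. Each conditional branch is still a composition of bare maps, so the joint distribution of the full syndrome history is likewise frame-independent. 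The initial state $\tilde\rho(0)=\rho(0)$ is prepared by local encoding, and that encoding is also a bare map under the same reduction. Since none of these bare maps depends on $\theta_A$, $\theta_B$ or $\delta\theta$, neither does $p_\xi$. Note that $\Urel$ appears only in the final comparison at $T$, through $\CV(T)^{-1}$, and never enters a local expectation value.

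The only delicate point is the conditioning step, namely making sure Theorem~\ref{thm:reduction} extends from CPTP channels to the individual trace-non-increasing outcome branches. The plan is to redo the conjugation identity $\CU_{\mathrm{lab}}^{-1}\circ\CE^{(t)}_M\circ\CU_{\mathrm{lab}}=\CE\otimes\id_{\Mbar}$ for each instrument element $\rho\mapsto\Pi_\xi\rho\Pi_\xi$ separately. The argument is unchanged, because it uses only the local conjugation structure and never trace preservation.
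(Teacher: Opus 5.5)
Your proof is correct and follows the same route the paper takes implicitly: the corollary is read off from the interaction-picture identity in the proof of Theorem~\ref{thm:reduction}, under which the frame-conjugated measurement becomes the bare $\Pi_\xi$ acting on $\tilde\rho(t)$, and $\tilde\rho(t)$ is itself a composition of bare maps with no $\theta$ dependence. Your explicit check that the conjugation identity holds for each trace-non-increasing instrument branch is left unstated in the paper, but it is exactly what is needed to cover conditioning on earlier syndrome outcomes; the check uses only linearity and the tensor structure $\CU_{\mathrm{lab}}=\CU_A\otimes\CU_B$.
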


Local syndromes therefore carry no proper-time information at \emph{any} stage; the difference is relational, available only at the frame conversion in Eq.~\eqref{eq:reduction}.

A Pauli on $M$ likewise commutes to $t=0$, so an error and its correction compose to the identity whatever the times (Appendix~\ref{app:timing}): the decoder must know \emph{which} qubit, never \emph{when}. Any history of local Paulis equals the same Paulis at $t=0$ followed by $\id_A\otimes\Urel$.

\emph{Covariance versus protection.}---Theorem~\ref{thm:reduction} says the rotation survives error correction intact, but nothing about whether the code can \emph{register} it. That question has a sharp answer. Let $\Code$ encode one logical qubit with projector $P$, and denote its logical Pauli operators by $\Xbar$ and $\Zbar$. Call $\Code$ \emph{proper-time covariant} if $\Wop\Code\subseteq\Code$; then $\bar W\equiv P\Wop P$ is logical and $\Urel|_\Code=e^{-i\delta\theta\bar W/2}$ is exactly unitary and preserves the code space. With $w_{0,1}$ the eigenvalues of $\bar W$, define $\mathcal{A}\equiv\tfrac12|w_0-w_1|$ as the \emph{amplification}; up to a global phase, the two logical eigenstates then acquire relative rotation angle $\mathcal{A}\delta\theta$.

\begin{theorem}\label{thm:tradeoff}
Let $\Code$ be proper-time covariant. Then (i) $\mathcal{A}\le n$, with equality iff $\Code=\mathrm{span}\{\ket{0}^{\otimes n},\ket{1}^{\otimes n}\}$ up to a relabelling of the local $Z$ bases, i.e.\ the Greenberger--Horne--Zeilinger (GHZ, or bit-flip repetition) code; (ii) if $\Code$ detects every single-qubit dephasing error, $PZ_iP=\lambda_iP$ for all $i$, then $\mathcal{A}=0$; (iii) hence $\mathcal{A}>0$ only if $PZ_iP\not\propto P$ for some $i$.
\end{theorem}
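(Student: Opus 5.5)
The plan is to work entirely inside the code space. Proper-time covariance, $\Wop\Code\subseteq\Code$, together with hermiticity of $\Wop$, means $\Code$ is an invariant subspace of $\Wop$. Hence $\bar W=P\Wop P$ is just $\Wop$ restricted to $\Code$. It can therefore be diagonalised by eigenvectors of $\Wop$ lying in $\Code$, with eigenvalues $w_0,w_1$ drawn from the spectrum of $\Wop$, namely $\{n-2k: k=0,\dots,n\}$. I would prove the three parts in this order: first (i), then (ii), and finally (iii) as the contrapositive of (ii).

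\textbf{Part (i), the bound.} Since every eigenvalue of $\Wop$ lies in $[-n,n]$, we get $|w_0-w_1|\le 2n$, so $\mathcal{A}\le n$.

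\textbf{Part (i), the equality case.} Equality forces $\{w_0,w_1\}=\{n,-n\}$. These eigenspaces of $\Wop$ are one-dimensional, spanned by $\ket{0}^{\otimes n}$ and $\ket{1}^{\otimes n}$. The eigenvectors of $\bar W$ inside $\Code$ must be exactly these two states, so $\Code=\mathrm{span}\{\ket{0}^{\otimes n},\ket{1}^{\otimes n}\}$.

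The clause ``up to relabelling of local $Z$ bases'' needs care, and I expect it to be the main (if mild) obstacle. Flipping $\ket{0}\leftrightarrow\ket{1}$ on some qubits sends $Z_i\to -Z_i$. This changes $\Wop$ to $\sum_i s_iZ_i$, whose extremal eigenvectors are product states such as $\ket{0}^{\otimes n}$ with some qubits flipped. I would state the result so that it covers exactly these sign conventions, and check the converse directly: for the GHZ code, $\Wop$ acts as $n\Zbar$, so $\mathcal{A}=n$.

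\textbf{Part (ii).} Suppose $PZ_iP=\lambda_iP$ for every $i$. Summing gives $\bar W=P\Wop P=\big(\sum_i\lambda_i\big)P$, which is proportional to the identity on $\Code$. Then $w_0=w_1$ and $\mathcal{A}=0$. This step uses only linearity and does not even need covariance. Covariance enters only to make $\mathcal{A}$ well defined as a logical rotation.

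\textbf{Part (iii).} This is the contrapositive of (ii): if $\mathcal{A}>0$, some $PZ_iP$ must fail to be proportional to $P$.

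Finally, I would note the interpretation. A code with $\mathcal{A}>0$ cannot satisfy the Knill--Laflamme condition for all single-qubit dephasing errors. This is the tradeoff the section title refers to, and it is consistent with the syndrome blindness of Corollary~\ref{cor:blind}.
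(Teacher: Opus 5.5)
Your proposal is correct and follows the paper's proof essentially step for step: for (i), the spectral bound $|w_{0,1}|\le\|\Wop\|=n$, with the one-dimensional extremal eigenspaces spanned by $\ket{0}^{\otimes n}$ and $\ket{1}^{\otimes n}$; for (ii), linearity $P\Wop P=(\sum_i\lambda_i)P$; and for (iii), the contrapositive. Your explicit check that the GHZ code attains $\mathcal{A}=n$, which closes the ``if'' direction of the equality case, and your remark that (ii) does not use covariance are small but correct additions that the paper leaves implicit.
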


Proof in Appendix~\ref{app:tradeoff}.

Theorem~\ref{thm:tradeoff} is the Knill--Laflamme condition~\cite{KL97} read backwards, consistent with covariant-code no-go results~\cite{EK09,Faist20,WA20,HNPS} and Preskill's precessing-clock argument~\cite{Preskill00}. It makes the trade-off exact: detecting every single-qubit $Z_i$ eliminates the continuous logical generator. Protection and amplification are controlled by the same $PZ_iP$, and no syndrome schedule changes that. The finite-angle response, however, can survive.

\emph{Response of a non-covariant code.}---Expanding Eq.~\eqref{eq:ulab} factor by factor, with $c_\delta\equiv\cos(\delta\theta/2)$ and $s_\delta\equiv\sin(\delta\theta/2)$,
\begin{equation}
  \Urel=\prod_{i=1}^n\big(c_\delta\,\Id-i\,s_\delta\,Z_i\big)
  =\sum_{K\subseteq[n]}(-i)^{|K|}c_\delta^{\,n-|K|}s_\delta^{\,|K|}Z_K,
  \label{eq:binomial}
\end{equation}
Here $[n]\equiv\{1,\ldots,n\}$ and $Z_K\equiv\prod_{i\in K}Z_i$; the subset $K$ records which qubits contributed their $Z$. This is a binomial expansion and nothing more: the metric rotation is a coherent superposition of $Z$-type errors of every weight. If $Z_K$ anticommutes with a stabilizer $S_j$ then $PZ_KP=PZ_KS_jP=-PZ_KP=0$, so the code space annihilates every term whose $Z_K$ is detectable, acting as a \emph{filter on the weight distribution} of Eq.~\eqref{eq:binomial}.

\begin{theorem}\label{thm:response}
For a stabilizer code with stabilizer group $\Sgroup$ and normaliser $\Ngroup(\Sgroup)$,
\begin{equation}
  P\Urel P=\!\!\sum_{\substack{K\subseteq[n]\\ Z_K\in\Ngroup(\Sgroup)}}\!\!(-i)^{|K|}c_\delta^{\,n-|K|}s_\delta^{\,|K|}\;\overline{Z_K},
  \label{eq:response}
\end{equation}
$\overline{Z_K}\equiv PZ_KP$ the induced logical operator, $=\pm P$ when $Z_K\in\Sgroup$. The leading non-scalar term is at $|K|=d_Z$, the minimum weight of a $Z$-type logical operator, so the logical response is $O(s_\delta^{d_Z})$.
\end{theorem}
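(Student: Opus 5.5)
The plan is to sandwich the binomial expansion~\eqref{eq:binomial} between code projectors, $P\Urel P=\sum_K(-i)^{|K|}c_\delta^{\,n-|K|}s_\delta^{\,|K|}PZ_KP$, and then sort the terms by the commutation relation of $Z_K$ with $\Sgroup$. Since $Z_K$ is a Pauli operator, it either commutes or anticommutes with each generator $S_j$.

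\emph{Step 1: discard detectable terms.} If $Z_K$ anticommutes with some $S_j\in\Sgroup$, I would use $P=S_jP=PS_j$. This gives $PZ_KP=PZ_KS_jP=-PS_jZ_KP=-PZ_KP$, hence $PZ_KP=0$, exactly as noted before the theorem. The only survivors are $Z_K$ that commute with all of $\Sgroup$, which are precisely the elements of $\Ngroup(\Sgroup)$. For each survivor, $Z_KP=PZ_K$, so $\overline{Z_K}=PZ_KP=Z_KP$ is a genuine logical operator on $\Code$. This proves~\eqref{eq:response}.

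\emph{Step 2: identify the stabilizer terms.} If $Z_K\in\Sgroup$, then $Z_KP=P$. The sign is fixed because $Z_K$ is a product of $Z$'s with no phase factor, so $Z_K$ equals either $+S$ or $-S$ for the group element $S$; if it is $-S$, then $-\Id\in\Sgroup$, which cannot happen for a nontrivial code. I would nevertheless write $\pm P$, as the statement does, to allow for sign conventions on the generators. These terms are scalar multiples of $P$. The remaining survivors, $Z_K\in\Ngroup(\Sgroup)\setminus\Sgroup$, are nontrivial logical operators of $Z$ type.

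\emph{Step 3: weight counting.} By definition of $d_Z$, every $Z$-type element of $\Ngroup(\Sgroup)\setminus\Sgroup$ has weight $|K|\ge d_Z$. Every term with $|K|<d_Z$ that survives Step 1 is therefore a stabilizer term, proportional to $P$. So, up to the scalar part $\sum_{|K|<d_Z,\,Z_K\in\Sgroup}(\cdots)P$, and together with the scalar contributions at higher weight, the first term that can act non-trivially on $\Code$ appears at $|K|=d_Z$. Its coefficient is $c_\delta^{\,n-d_Z}s_\delta^{\,d_Z}$, and all higher terms carry at least $s_\delta^{\,d_Z+1}$. Hence the non-scalar part of $P\Urel P$ is $O(s_\delta^{d_Z})$.

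\emph{Main obstacle.} The main subtlety is stating precisely what ``non-scalar'' means. The stabilizer terms at weights above $d_Z$ are also scalar, and $P\Urel P$ need not be unitary on $\Code$. I would therefore phrase the conclusion as follows: write $P\Urel P=\alpha(\delta\theta)P+L$, where $L$ is a sum of nontrivial logical operators with coefficients each of order $s_\delta^{|K|}$, $|K|\ge d_Z$. This makes the operator norm of $L$ at most a constant times $s_\delta^{d_Z}$ as $\delta\theta\to0$. The constant is the number of minimum-weight $Z$ logicals plus a bounded tail, and $\|\overline{Z_K}\|\le1$.
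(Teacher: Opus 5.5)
Your proposal is correct and follows essentially the paper's own argument: sandwich the binomial expansion~\eqref{eq:binomial} between code projectors, annihilate every detectable $Z_K$ via $PZ_KP=PZ_KS_jP=-PZ_KP$, and read off the scalar stabilizer terms and the weight bound $O(s_\delta^{d_Z})$. Two minor refinements: the $\pm P$ is genuinely needed, since $-Z_K$ can itself be a stabilizer (the paper's $\{\ket{01},\ket{10}\}$ example has $-Z_1Z_2\in\Sgroup$, so there $Z_1Z_2\in\Ngroup(\Sgroup)\setminus\Sgroup$ is nonetheless scalar, and ``$Z_K\in\Sgroup$'' and $d_Z$ are best read up to sign); and in that same example the weight-one logical terms $\overline{Z_1}=-\overline{Z_2}$ cancel, which confirms that your upper-bound reading of the final clause is the right rigorous content.
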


Read physically: $c_\delta^nP$ is the amplitude for the metric to have done nothing, stabilizer terms renormalise it, logical terms are the signal, and the rest projects into a nontrivial syndrome sector. A code is covariant exactly when \emph{nothing} is filtered. Figure~\ref{fig:filter} illustrates this filtering for the phase-flip code, for which the only nontrivial $Z$-type normaliser element is $Z_1Z_2Z_3=\Xbar$, so only $|K|=0,3$ survive and
\begin{equation}
  P\Urel P=c_\delta^3P+i\,s_\delta^3\Xbar,
  \label{eq:pf}
\end{equation}
a rotation about the code's logical $\Xbar$ axis, not its $\Zbar$ axis.

\begin{figure}[t]
  \centering
  \includegraphics[width=\columnwidth]{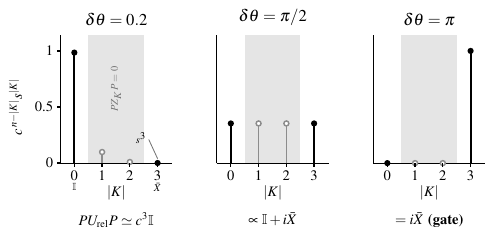}
  \caption{Code-space filtering for the phase-flip $[[3,1,3]]$ code. Detectable weights one and two vanish under $P\Urel P$; only the identity and logical-$\Xbar$ terms survive. At $\delta\theta=\pi$ the excursion is exactly $i\Xbar$.}
  \label{fig:filter}
\end{figure}

\emph{An excursion is a logical gate.}---
\begin{corollary}\label{cor:gate}
Suppose $Z^{\otimes n}\in\Ngroup(\Sgroup)\setminus\Sgroup$. A worldline excursion reaching $\omega_0\Delta\tau=\pi$, by held separation or transport, gives $c_\delta=0$ and $s_\delta=1$, so every term of Eq.~\eqref{eq:response} vanishes except $|K|=n$:
\begin{equation}
  P\Urel P=(-i)^n\,\overline{Z^{\otimes n}}.
\end{equation}
This is a \emph{deterministic logical gate}: unit code-space survival and trivial syndrome. Being a logical Pauli it acts trivially on its own eigenstates, so the readout state must not be one; the logical Bell state $\ket{0_L1_L}+\ket{1_L0_L}$ maps to $\ket{0_L0_L}+\ket{1_L1_L}$, which $\Zbar_A\Zbar_B$ separates at full contrast.
\end{corollary}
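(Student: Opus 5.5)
We specialise Theorem~\ref{thm:response} to $\delta\theta=\pi$. Substituting $c_\delta=\cos(\pi/2)=0$ and $s_\delta=1$ into Eq.~\eqref{eq:response}, every term carries the factor $c_\delta^{\,n-|K|}$. This factor vanishes unless $|K|=n$, since $0^0=1$ for the top term. The only subset with $|K|=n$ is $K=[n]$, giving $Z_K=Z^{\otimes n}$. The hypothesis $Z^{\otimes n}\in\Ngroup(\Sgroup)$ ensures this term is retained by the filter rather than projected out. Hence $P\Urel P=(-i)^n\overline{Z^{\otimes n}}$.

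The cleanest cross-check bypasses the sum. At $\delta\theta=\pi$, Eq.~\eqref{eq:binomial} factorises as $\Urel=\prod_i(-iZ_i)=(-i)^nZ^{\otimes n}$ exactly, as an operator on the full register. Because $Z^{\otimes n}$ commutes with every stabilizer, it preserves the code space. So $P\Urel P=\Urel P$, and $\Urel$ restricted to $\Code$ is unitary.

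Next we establish the claimed properties. \emph{Unit code-space survival} follows from unitarity: $\Tr[P\Urel\rho\Urel^\dagger P]=\Tr\rho$ for $\rho$ supported on $\Code$. \emph{Trivial syndrome} follows because each stabilizer $S$ commutes with $Z^{\otimes n}$, so $\Urel\rho\Urel^\dagger$ remains a $+1$ eigenstate of every $S$. The same conclusion applies in the final comparison frame via Corollary~\ref{cor:blind} and Theorem~\ref{thm:reduction}. Since $Z^{\otimes n}\notin\Sgroup$, the induced operator $\overline{Z^{\otimes n}}$ is a nontrivial logical Pauli rather than $\pm P$, so the gate is nontrivial. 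Determinism holds because no measurement outcome branches: the map is a fixed unitary.

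For the Bell-state illustration, Theorem~\ref{thm:reduction} says the rotation acts only on $B$ as $\Id_A\otimes\Urel$. We need to identify which logical Pauli $\overline{Z^{\otimes n}}$ is. For the readout claim we take it to be $\Xbar$ up to phase, as in the phase-flip example, Eq.~\eqref{eq:pf}. Then $\ket{0_L1_L}+\ket{1_L0_L}\mapsto\ket{0_L0_L}+\ket{1_L1_L}$ up to a global phase. The former has $\Zbar_A\Zbar_B=-1$ and the latter $+1$, so the parity measurement distinguishes them with unit contrast. The general remark that a logical Pauli fixes its own eigenstates is immediate.

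The main obstacle is this identification of $\overline{Z^{\otimes n}}$. The corollary's Bell example implicitly assumes $\overline{Z^{\otimes n}}\propto\Xbar$ (or $\Ybar$), i.e.\ anticommuting with $\Zbar$. In the plan we would state this as the relevant case, verified for the $[[3,1,3]]$ code. For a general code, the readout state should be chosen as a non-eigenstate of whichever logical Pauli arises, and the parity measurement as the corresponding conjugate pair.
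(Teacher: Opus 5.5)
Your proposal is correct and follows the paper's own route: setting $c_\delta=0$, $s_\delta=1$ in Eq.~\eqref{eq:response} eliminates every term except $K=[n]$. Your direct factorisation $\Urel=(-i)^nZ^{\otimes n}$ is the same observation made before projecting, and it makes unit survival and trivial syndrome immediate.

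Your caveat on the readout is also well taken. $\Zbar_A\Zbar_B$ separates the two Bell states at full contrast only when $\overline{Z^{\otimes n}}$ anticommutes with $\Zbar$ (phase-flip, Shor). For $[[5,1,3]]$, Steane or bit-flip, where $\overline{Z^{\otimes n}}\propto\Zbar$, the Bell state instead maps to $\ket{0_L1_L}-\ket{1_L0_L}$, and $\Xbar_A\Xbar_B$ is the separating observable.
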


Among the examples in Appendix~\ref{app:gatecond}, the hypothesis holds for the phase-flip, bit-flip, Steane, Shor, $[[5,1,3]]$ and odd-distance rotated surface codes, while $[[4,2,2]]$ and codes containing a stabilizer with odd $X/Y$ parity provide distinct failures; which logical operator results is code-dependent.

Nor is it fine-tuned, and the way it fails is the point. At $\delta\theta=\pi/2^k+\epsilon$, with integer $k\ge0$, the terms of Eq.~\eqref{eq:response} no longer conspire, and the deviation splits into two channels of very different order. The logical-state-averaged probability of remaining in the code space is
\begin{equation}
  \Tr[(P\Urel P)^{\dagger}(P\Urel P)]/2 = 1-\kappa\,\epsilon^{2}+O(\epsilon^{4}),
  \label{eq:survival}
\end{equation}
so a fraction $\kappa\epsilon^2+O(\epsilon^4)$ is projected into nontrivial syndrome sectors. These events are detected by syndrome extraction and may be discarded; $\kappa=3/4$ for the phase-flip code at $\pi$ and $15/4$ for $[[15,1,3]]$ at $\pi/4$. Appendix~\ref{app:survival} derives Eq.~\eqref{eq:survival} and the coefficients. Conditioned on the accepted trivial-syndrome branch, the entanglement (process) fidelity is far higher: the induced operator departs from the ideal gate only at order $\epsilon^{2d_Z}$, for the phase-flip code
\begin{equation}
  F_{\Xbar}^{\rm cond}=\frac{s_\delta^6}{c_\delta^6+s_\delta^6}=1-(\epsilon/2)^6+O(\epsilon^{8}),
  \label{eq:robust}
\end{equation}
and for $[[15,1,3]]$ the conditional infidelity likewise scales as $\epsilon^{6}$; both statements are derived in Appendix~\ref{app:survival}. For these distance-three examples, miscalibration therefore primarily reduces the accepted-shot probability at second order, while the conditional logical error of accepted shots is parametrically higher order. This is the $O(s_\delta^{d_Z})$ suppression of Theorem~\ref{thm:response} at the other end of the range: a poor analogue phase meter is for that reason an excellent digital gate.

The construction ladders. At $\delta\theta=\pi/2^k$, a Calderbank--Shor--Steane (CSS) code whose full binary $X$-stabilizer support space is $2^{k+1}$-divisible, with each logical coset of fixed Hamming weight modulo $2^{k+1}$, admits the corresponding transversal $Z$ rotation. This is a property of the full support spaces, not merely of the weights of a chosen generator set. Thus $[[15,1,3]]$ at $\pi/4$ returns an exact logical $\Tbar^\dagger$ in the convention used in this Letter even though $P\Wop P=0$, so its protected code-space response vanishes at first order; Appendix~\ref{app:gatecond} gives the finite-angle derivation.

\begin{corollary}[Clock-off and clock-on]\label{cor:switch}
Call an encoded pure state \emph{continuously clock-off} when proper-time evolution changes it only by an overall phase,
$\Urel(\delta\theta)\ket{\mathrm{off}}=e^{i\phi_{\rm off}(\delta\theta)}\ket{\mathrm{off}}$ for all $\delta\theta$. Equivalently, $\ket{\mathrm{off}}$ is a $\Wop$ eigenstate. A locally observable clock requires coherent support on sectors with distinct $\Wop$ eigenvalues; for a proper-time-covariant logical qubit their relative phase rate is $\mathcal{A}=\tfrac12|w_0-w_1|$. More generally, let a non-covariant code admit an exact finite excursion $\delta\theta_0$ with $\Urel(\delta\theta_0)P=U_LP$, where $U_L$ is the induced logical unitary. A state is clock-off at that excursion if $U_L\ket{\mathrm{off}}=e^{i\phi_{\rm off}}\ket{\mathrm{off}}$; an entire subspace is clock-off when the same eigenphase holds throughout it. Clock-on states are coherent superpositions of logical eigensectors with distinct eigenphases.
\end{corollary}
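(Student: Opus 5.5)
The corollary bundles a definitional equivalence, a necessity claim for local observability, and the two covariant and finite-excursion specialisations. I would prove these in turn, working throughout in the reduced picture of Theorem~\ref{thm:reduction}, where proper time enters only through $\id_A\otimes\Urel$ with $\Urel(\delta\theta)=e^{-i\delta\theta\Wop_B/2}$.

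\emph{The equivalence.} For the direction ``$\Wop$ eigenstate $\Rightarrow$ clock-off'': if $\Wop\ket{\mathrm{off}}=w\ket{\mathrm{off}}$, then $\Urel(\delta\theta)\ket{\mathrm{off}}=e^{-i\delta\theta w/2}\ket{\mathrm{off}}$ for every $\delta\theta$. For the converse I would differentiate the defining relation at $\delta\theta=0$. This gives $-\tfrac{i}{2}\Wop\ket{\mathrm{off}}=i\phi_{\rm off}'(0)\ket{\mathrm{off}}$, so $\ket{\mathrm{off}}$ is a $\Wop$ eigenstate. Differentiability of $\phi_{\rm off}$ needs a word of justification. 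It follows because $\bra{\mathrm{off}}\Urel\ket{\mathrm{off}}=e^{i\phi_{\rm off}}$ is analytic in $\delta\theta$ and has unit modulus. Alternatively, one can avoid calculus altogether. Expand $\ket{\mathrm{off}}=\sum_w\ket{\psi_w}$ in the eigenspaces of $\Wop$, whose eigenvalues are $n-2j$. The relation becomes $\sum_w(e^{-i\delta\theta w/2}-e^{i\phi})\ket{\psi_w}=0$ for all $\delta\theta$, and distinct characters of $\delta\theta$ forbid two nonzero components.

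\emph{Necessity of distinct sectors for a local clock.} By Corollary~\ref{cor:blind}, no syndrome detects $\delta\theta$. Any readout must therefore compare the state after $\Urel$ with the state before it. Take an observable $O$ and write $\rho=\sum_{w,w'}\Pi_w\rho\Pi_{w'}$ in terms of the $\Wop$ spectral projectors. Then
\begin{equation}
\Tr[O\,\Urel\rho\Urel^\dagger]=\sum_{w,w'}e^{-i\delta\theta(w-w')/2}\,\Tr[O\,\Pi_w\rho\Pi_{w'}].
\end{equation}
This depends on $\delta\theta$ only through coherences with $w\neq w'$, which is the claimed coherent-support condition. For a proper-time-covariant code, $\Urel|_\Code=e^{-i\delta\theta\bar W/2}$, and the relevant sectors are the two logical eigenvectors of $\bar W$. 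Their relative phase is $\delta\theta(w_0-w_1)/2$, up to sign, so the rate is $\mathcal{A}$ as defined before Theorem~\ref{thm:tradeoff}.

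\emph{The finite-excursion generalisation.} I would read this part as definitions backed by a short consistency check. Given $\Urel(\delta\theta_0)P=U_LP$, restrict to the code space, where $U_L$ is unitary. Diagonalise $U_L=\sum_k e^{i\phi_k}Q_k$ by spectral projectors. A state in a single $Q_k$ acquires only the global phase $e^{i\phi_k}$, and an entire $Q_k$ subspace shares that phase. The same coherence computation as above, with $U_L$ in place of $\Urel$ and logical observables, shows that a readout at $\delta\theta_0$ distinguishes the state from its image only through off-diagonal blocks $Q_k\rho Q_{k'}$ with $\phi_k\neq\phi_{k'}$. This gives the clock-on characterisation, and Corollary~\ref{cor:gate} supplies instances such as $U_L=(-i)^n\overline{Z^{\otimes n}}$.

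The main obstacle is the phrase ``locally observable''. One must fix what counts as an observation; I would take it to mean any measurement after the frame conversion of Theorem~\ref{thm:reduction}. One must also argue that no pre-hold reference can bypass the coherence requirement. The expectation-value expansion above handles both points once that convention is stated explicitly.
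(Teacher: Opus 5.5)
Your proposal is correct and follows the paper's own reasoning, which is largely implicit because the paper gives no separate proof: it treats the equivalence and the covariant rate as immediate from the spectral decomposition of $\Wop$ and of $\bar W$ (the sentence defining $\mathcal{A}$ before Theorem~\ref{thm:tradeoff}), and the finite-excursion part as the eigenspace remark in the clock/no-clock paragraph of Appendix~\ref{app:family}. Your orthogonal-component converse and the sector-coherence expansion of $\Tr[O\,\Urel\rho\Urel^\dagger]$ simply make those steps explicit.
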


Corollary~\ref{cor:switch} also separates an ordinary clock comparison from a genuinely interferometric application. A single encoded register read against a remote local oscillator is operationally an error-corrected clock comparison: the theorem still applies, but the relative phase may be attributed to the two classical reference clocks. The stronger use is an encoded analogue of the clock/no-clock state of Ref.~\cite{Borregaard2025},
\begin{equation}
  \ket{\Psi_{\rm cn}}\propto
  \ket{\mathrm{off}}_A\ket{\mathrm{on}}_B+
  \ket{\mathrm{on}}_A\ket{\mathrm{off}}_B,
\end{equation}
for which the identity of the ticking memory is itself coherently superposed and proper time changes a branch-interference observable rather than merely the phase of an analyser.

\emph{Numbers.}---Take two registers at fixed heights differing by $\Delta h$. For $^{137}$Ba$^+$ hyperfine qubits, $\omega_0=2\pi\times8.037$~GHz,
$$
  \frac{d\,\delta\theta}{dt}=\frac{\omega_0\,g\,\Delta h}{c^{2}},
  \qquad
  T_\pi=\frac{\pi c^{2}}{\omega_0\,g\,\Delta h}=\frac{5.70\times10^{5}\ \mathrm{m\,s}}{\Delta h}.
$$
The rate is $5.51\times10^{-6}$~rad\,s$^{-1}$\,m$^{-1}$. For scale, the Ba$^+$ transition is a concrete benchmark rather than a requirement: $T_\star\propto\omega_0^{-1}$. An encoded two-level optical transition at $193$~THz would reach $\delta\theta=\pi/2$ at $\Delta h=10$~m in about $1.2$~s, versus $7.9$~h for the $8.037$~GHz transition. One parameter follows, whose two ends behave oppositely: at small $\delta\theta$ the phase is a syndrome-invisible perturbation, while at admitted finite excursions it closes exactly into a logical gate.

\emph{Small $\delta\theta$: an error no syndrome reports.} For a pure encoded probe $\ket{\Psi}$, the state overlap after the relational rotation on $B$ obeys
$1-\big|\bra{\Psi}(\Id_A\otimes\Urel)\ket{\Psi}\big|^2
=\frac{\mathcal{F}_Q}{4}\,\delta\theta^2+O(\delta\theta^4)$,
where $\mathcal{F}_Q=4\,\mathrm{Var}_{\Psi}(\Wop_B/2)$ is the pure-state quantum Fisher information; the expansion is derived in Appendix~\ref{app:survival}.
Thus the size of the logical disturbance is code- and state-dependent even though its syndrome blindness is universal (Corollary~\ref{cor:blind}). No local correction reveals or removes the relational phase. Whether this matters turns on the reference. A master oscillator at height $h_0$ distributed to both gives node $M$ a detuning $\Delta_M\equiv\omega_0g(h_0-h_M)/c^2$, leaving the differential rate unchanged but making it a genuine collective error. Over one syndrome period $\tau_{\mathrm{QEC}}$ the corresponding phase is $\phi_M=\Delta_M\tau_{\mathrm{QEC}}$, which syndrome extraction filters to $O(\phi_M^{d_Z})$. A processor needing such a reference anyway for gates between modules is thereby protected, by the same mechanism that freezes the effect in Ref.~\cite{Preskill00}.

For independent network nodes the relational phase remains absent from every local syndrome record. Its observable effect is instead set by the encoded probe: at small $\delta\theta$ the state change is governed by $\mathcal{F}_Q$ as above. Thus the syndrome record supplies neither an estimate of $\Delta\tau$ nor a trigger for correcting it; the accumulated relation is revealed only when the two frames are finally compared.

\emph{Finite excursion: a gate the code admits.} At the $450$~m Tokyo Skytree scale~\cite{Takamoto20}, the target holds are $1267$, $634$, and $317$~s for $\delta\theta=\pi,\pi/2,\pi/4$. These times can be traded against encoding resources: with a suitable logical-basis choice, concatenating each outer-code qubit with a $b$-qubit GHZ block amplifies the physical phase to $b\,\delta\theta$ at the outer-code level, so a gate admitted by the outer code at angle $\theta_0$ is reached at $\delta\theta=\theta_0/b$. The required hold is therefore shorter by a factor $b$, at the corresponding resource and noise-sensitivity cost (Appendix~\ref{app:family}). Dynamical decoupling cannot protect a quasi-static phase while retaining the same quasi-static signal. For a hold of duration $T$ with syndrome period $\tau_{\mathrm{QEC}}$, define $N\equiv T/\tau_{\mathrm{QEC}}$ rounds; the relevant requirement is then $Np_L\ll1$, with $p_L$ the logical failure probability per round.

This Letter makes no stronger circuit-level claim: $p_L$ depends on the extraction gadget, ancilla preparation and verification, leakage, correlated faults, schedule, and decoder. Appendix~\ref{app:gatecond} therefore lists examples, not an optimal code set. Any code and extraction architecture satisfying the algebraic gate condition and the required logical lifetime could serve; other constructions may well be experimentally superior. Collective phase offsets act through the same generator $\Wop$ and must be tracked rather than corrected (Appendix~\ref{app:collective}).

\emph{Discussion.}---Proper time acts on a pair of locally corrected memories as a relational $U(1)$ that error correction cannot see and cannot remove, exactly, for any code and any history of local frame-covariant operations/errors satisfying Axiom~\ref{ax:comoving}. Corollary~\ref{cor:gate} is what that buys: on any code satisfying its normaliser condition, a worldline excursion is a clean logical operation the code neither suppresses nor notices. Theorem~\ref{thm:tradeoff} is the price of that blindness: noise and signal share the generator $\Wop$, so a code protecting against dephasing cannot register the rotation at first order, the stabilizer form of the obstruction that denies Heisenberg scaling to a signal parallel to its noise~\cite{ZZPJ18}. The new ingredients of this Letter are the exact distinct-worldline factorisation of Theorem~\ref{thm:reduction} and the conversion of the surviving finite proper-time phase into a discrete logical gate (Appendix~\ref{app:gatecond}). Relativistic motion has independently been proposed as a way to parameterise physical quantum gates~\cite{LeMaitre25}; here the proper-time phase survives local QEC as a relational logical operation.

Preskill's conclusions separate naturally into common-frame and code-structure statements. His first two hold in a common rest frame, where Theorems~\ref{thm:reduction} and~\ref{thm:tradeoff} recover them, but do not carry over unchanged to distinct worldlines: nothing is frozen, and the surviving phase is the physical relation $\omega_0\Delta\tau$, not a convention. His remaining conclusions, concerning covariance and the protected channel, are recovered by Theorem~\ref{thm:tradeoff} and the finite-angle response developed here.

Operationally, Eq.~\eqref{eq:reduction} requires no common oscillator during the hold; the registers are placed in one comparison frame only at final readout, by reunion or an equivalent final phase-reference transfer. If reunion is used, there is at least a trapped-particle transport precedent: a cryogenic Penning trap retaining $10^2$ protons has been transported $3.7$~km by road~\cite{BASESTEP25}. At the discrete gate points the transformed logical Bell states can be distinguished by logical-$Z$ correlations in the final comparison frame, whereas generic continuous phase readout requires an equatorial reference there. The phase is therefore not only a rate to be estimated: at admitted values $\pi/2^k$, the same suppression that defeats first-order metrology closes into a deterministic gate. Within this sequence, smaller admitted phases climb the Clifford hierarchy, reaching a $T$-type gate at $\pi/4$.

Theorem~\ref{thm:reduction} also supplies a sharp, model-independent null test. Within its assumptions, arbitrary local frame-covariant operations and errors are absorbed into the bare local circuit, leaving only $\Id_A\otimes\Urel$ as the relational residue. Because this residue is local and unitary, the proper-time contribution can change the relational phase but cannot change the $A|B$ entanglement of an initially encoded logical Bell state, for any hold time or accumulated $\delta\theta$. Corollary~\ref{cor:blind} independently implies that the proper-time contribution cannot alter the local syndrome statistics. These give two complementary diagnostics of dynamics outside the reduced channel: entanglement loss beyond that predicted by the independently characterised logical-memory noise channel, or a gravitationally correlated excess or change in nontrivial syndrome events. Either would signal evolution not representable as $\Id_A\otimes\Urel$. Stochastic or non-unitary gravitational modifications, including collapse-like or metric-diffusion-like effective channels, provide examples of departures that could violate this null prediction, but their microscopic form and magnitude are model dependent. Conventional objective-collapse models, including Di\'osi--Penrose and CSL-type theories, are normally formulated for superpositions of spatial mass density, so the present internal-clock-memory configuration is not automatically their natural regime~\cite{Diosi89,Penrose96,Bassi13}.

Two extensions follow naturally from the present qubit result. First, the clock-off/clock-on sectors above provide the ingredients for a protected analogue of the Borregaard--Pikovski clock/no-clock interferometer. In Ref.~\cite{Borregaard2025} the no-clock branch is stationary, up to an overall phase, throughout the proper-time evolution, whereas the non-covariant constructions considered here guarantee that invariance only at admitted finite excursions; constructing the full protected interference protocol and its visibility observable is therefore a separate task. Second, the scaling $T_\star\propto\omega_0^{-1}$ makes optical-frequency memories especially attractive. The theorem as stated, however, applies to qubit registers with locally referenced two-level transitions; extending the frame-reduction argument to bosonic oscillator encodings or measurement-based architectures requires a corresponding covariance analysis.

A fault-tolerant memory is built to hold a state still, and between worldlines it cannot: it keeps proper time exactly, and no syndrome records that it does. This turns error-corrected memories into a platform for experiments at the interface of quantum coherence and general-relativistic proper time. A single-register readout is an encoded clock comparison; entangled logical memories add nonlocal null tests, while clock/no-clock extensions suggest making the clock assignment itself quantum and converting the same relational unitary into proper-time interference. Discrete excursions additionally supply logical gates. The coherence that kept these regimes out of reach is supplied by a code rather than isolation. Left to run, the phase is an error a network cannot see; read on a code that admits it, proper time itself supplies the logical gate.

\begin{acknowledgments}
% SPK: confirm each name below is happy to be acknowledged before submission.
We thank O.~Erkilic, J.~Foo and D.~Ramachandran for comments on the manuscript,
L.~O.~Conlon for discussions on the comparison with unencoded interferometry,
and T.~C.~Ralph, P.~S.~Turner, G.~K.~Brennen and A.~D.~Greentree for discussions.
OpenAI ChatGPT (GPT-5.6 Sol) was used to assist with manuscript editing, literature cross-checking, and technical consistency checks. Anthropic Claude (Opus 5.5) was used to independently assess scientific arguments and assist with manuscript revision. In all cases, the authors directed the queries and independently verified the scientific content.
\end{acknowledgments}

\paragraph*{Data availability.} All numerical values reported in this Letter and its appendices follow directly from the equations presented here; no research data sets were created or analyzed.

\appendix
\setcounter{secnumdepth}{2}
\renewcommand{\thesection}{\Alph{section}}

% ---------------- END MATTER (counts against the 2-page allowance) ----------------
\section{Frame covariance of local noise}\label{app:rwa}

To illustrate the extension beyond control operations, consider local noise represented semiclassically by time-dependent fields. In $M$'s rest frame let $\Omega_i^{x,y,z}(t)$ denote their components, so the single-qubit Hamiltonian is $H_i=\tfrac{\hbar\omega_0}{2}Z_i+\tfrac{\hbar}{2}[\Omega^x_iX_i+\Omega^y_iY_i+\Omega^z_iZ_i]$. Longitudinal noise commutes with $\Wop$. For the transverse part, pass to the frame rotating at $\omega_0$: $R_Z(\omega_0\tau)X_iR_Z(-\omega_0\tau)=\cos(\omega_0\tau)X_i+\sin(\omega_0\tau)Y_i$, so off-resonant components average away. Writing a near-resonant component with amplitude $\Omega_{i,0}$ as $\Omega^x_i=\Omega_{i,0}\cos(\omega_0\tau_M+\psi_i)$ and discarding the counter-rotating term gives a stationary equatorial generator. Undoing the frame transformation at $t_e$ yields
\begin{equation}
  E_i(t_e)=R_Z(\beta+\psi_i)\,X_i\,R_Z(-\beta-\psi_i),
  \qquad \beta=\theta_M(t_e),
  \label{eq:noiseaxis}
\end{equation}
which has the covariance form of Axiom~\ref{ax:comoving} for fixed event offset $\psi_i$.

The redshift enters through the resonance condition: a field at coordinate frequency $\omega_{\rm coord}$ is seen on $\gamma_M$ at $\omega_{\rm coord}\,dt/d\tau_M$, so only spectral weight near the qubit's lab frequency $\omega_0\,d\tau_M/dt$ drives it. Resonance pins $\omega_{\rm coord}$ only to within a linewidth, so $\psi_i$ is constant over a single flip but is not pinned to any particular worldline's frame across events; Eq.~\eqref{eq:noiseaxis} asserts the equatorial form and nothing stronger.

\section{When the excursion is a gate}\label{app:gatecond}

Corollary~\ref{cor:gate} needs $Z^{\otimes n}\in\Ngroup(\Sgroup)\setminus\Sgroup$. For a general Pauli stabilizer, $Z^{\otimes n}$ commutes with it iff the number of $X$ or $Y$ factors is even; for a CSS code this reduces to even weight of every $X$-type stabilizer. The second condition excludes the degenerate case in which $Z^{\otimes n}$ is itself a stabilizer and the gate is the identity. For the smaller-angle CSS rows below, the ``divisor'' is a property of the full binary space generated by the $X$-type stabilizers, together with the fixed weight residues of the logical cosets; the listed check weights are only representative generators.

The smaller-angle condition follows directly in the computational basis. With $\mathrm{wt}(\mathbf{x})$ the Hamming weight of bit string $\mathbf{x}$,
$R_Z(\theta)^{\otimes n}\ket{\mathbf{x}}=e^{-in\theta/2}e^{i\theta\,\mathrm{wt}(\mathbf{x})}\ket{\mathbf{x}}$. At $\theta=\pi/2^k$ with integer $k\ge0$, a CSS logical branch therefore acquires a common phase iff every word in its $X$-stabilizer support differs in weight by a multiple of $2^{k+1}$ and each logical coset has a fixed weight residue modulo $2^{k+1}$. Under precisely this condition the excursion preserves the code space and induces a logical $Z$ rotation with unit survival. For the $[[15,1,3]]$ punctured Reed--Muller code, the $X$-stabilizer support is the $8$-divisible $[15,4,8]$ simplex code: $\ket{0_L}$ has weights $0,8$ and $\ket{1_L}$ weights $7,15$, so at $\theta=\pi/4$ the two branches differ by $e^{-i\pi/4}$, giving $\Tbar^\dagger$ in the convention of this Letter. To fix the gate signs in the table, for the one-logical-qubit CSS rows we choose $\ket{0_L}$ as the divisible $X$-stabilizer superposition and $\ket{1_L}=X^{\otimes n}\ket{0_L}$; the logical-one coset then has weight residue $n$ modulo the listed divisor.

Here $d$ denotes the code distance.

\begin{widetext}
\begin{center}
\begin{tabular}{lcccccll}
\hline\hline
code & $n$ & $d$ & representative $X$-gen. weights & space divisor & $\delta\theta$ & gate & hold (s) \\
\hline
\multicolumn{8}{l}{\emph{doubly even (4-divisible) examples}}\\
Steane $[[7,1,3]]$ & $7$ & $3$ & $4,4,4$ & $4$ & $\pi/2$ & $\Sbar^\dagger$ & $634$ \\
$[[17,1,5]]$ & $17$ & $5$ & $4,8$ & $4$ & $\pi/2$ & $\Sbar$ & $634$ \\
Golay $[[23,1,7]]$ & $23$ & $7$ & $8,12,16$ & $4$ & $\pi/2$ & $\Sbar^\dagger$ & $634$ \\
$[[47,1,11]]$ & $47$ & $11$ & $\equiv0\ (4)$ & $4$ & $\pi/2$ & $\Sbar^\dagger$ & $634$ \\
\hline
\multicolumn{8}{l}{\emph{triply even (8-divisible)}}\\
Reed--Muller $[[15,1,3]]$ & $15$ & $3$ & $8^{\times4}$ & $8$ & $\pi/4$ & $\Tbar^\dagger$ & $317$ \\
Reed--Muller $[[31,1,3]]$ & $31$ & $3$ & $16^{\times5}$ & $16$ & $\pi/8$ & $\Tbar^{-1/2}$ & $158$ \\
\hline
\multicolumn{8}{l}{\emph{Pauli only, or degenerate}}\\
phase-flip $[[3,1,3]]$ & $3$ & $3$ & $2,2$ & $2$ & $\pi$ & $\Xbar$ & $1267$ \\
Shor $[[9,1,3]]$ & $9$ & $3$ & $6,6$ & $2$ & $\pi$ & $\Xbar$ & $1267$ \\
$[[5,1,3]]$ & $5$ & $3$ & (non-CSS) & --- & $\pi$ & $\Zbar$ & $1267$ \\
bit-flip $[[3,1,3]]$ & $3$ & $3$ & --- & --- & any (covariant) & $R_Z(n\delta\theta)$ & --- \\
$[[4,2,2]]$ & $4$ & $2$ & $4$ & --- & --- & identity & --- \\
odd $X/Y$ parity & --- & --- & $3$ & --- & $\pi$ & not preserved & --- \\
\hline\hline
\end{tabular}
\end{center}
\end{widetext}

The familiar distance-three examples above clear the condition at $\delta\theta=\pi$. How far below $\pi$ a CSS code reaches is set by the divisibility of its \emph{full} $X$-stabilizer support space and the weight residues of its logical cosets. If that support space is $2^{k+1}$-divisible and each logical coset has fixed weight modulo $2^{k+1}$, then $\delta\theta=\pi/2^k$ preserves the code space and induces the corresponding logical $Z$ rotation; generator rows of weight $2^{k+1}$ are not by themselves sufficient. A $4$-divisible support space gives the Clifford $S$-type working point, while the third level first appears for the appropriate $8$-divisible structure. Codes below their divisibility limit do not merely lose contrast, they leave the code space: the phase-flip code retains only $1/4$ of its weight at $\pi/2$, and $[[5,1,3]]$ only $1/16$. The two failures at $\pi$ are distinct: in $[[4,2,2]]$, $Z^{\otimes4}$ is a stabilizer and the metric acts trivially; if a stabilizer has odd $X/Y$ parity, $Z^{\otimes n}$ does not preserve the code space at the excursion.

\section{Proof of Theorem~\ref{thm:tradeoff}}\label{app:tradeoff}

(i) $\|\Wop\|=n$ so $|w_{0,1}|\le n$; equality forces $\{w_0,w_1\}=\{n,-n\}$, whose eigenspaces are the one-dimensional spans of $\ket{0}^{\otimes n}$ and $\ket{1}^{\otimes n}$. (ii) $PZ_iP=\lambda_iP\;\forall i$ gives $P\Wop P=(\sum_i\lambda_i)P$, so $\bar W$ is proportional to the logical identity and $w_0=w_1$. (iii) is the contrapositive of (ii). The converse fails: on $\{\ket{01},\ket{10}\}$, $PZ_1P=\Zbar$ and $PZ_2P=-\Zbar$ are both non-scalar, yet they sum to zero and $\mathcal{A}=0$.

% ---------------------------------------------------------------------------
% ARXIV APPENDICES.  Appendices D--H incorporate the additional derivations
% and technical analyses used by the main text.
% ---------------------------------------------------------------------------

\section{The offsets $\psi_i$ and their erasure}\label{app:offsets}

For comparison, Appendix~\ref{app:timing} gives the timing-independence identity for ideal Pauli corrections. For a physical equatorial flip write $E_i=R_Z(\beta+\psi_i)X_iR_Z(-\beta-\psi_i)$, with $\beta$ the local oscillator phase at the event; the additional event offset is $\psi_i$. A sequence of $r$ flips on one qubit with no intervening syndrome extraction leaves
\begin{equation}
  R_Z\Big(\theta_M+2\textstyle\sum_{j=1}^{r}(-1)^{r-j}\psi_j\Big)X^r:
\end{equation}
the times cancel identically (by Theorem~\ref{thm:reduction}), the offsets random-walk.

Let $P$ denote the code projector and $\mathcal{A}$ the amplification defined in the main text. Syndrome extraction can discretise the offsets, and the criterion for when is exact. Write the event as $\cos\psi\,X_i+\sin\psi\,Y_i$. The measurement collapses it onto the $X_i$ or $Y_i$ branch, and correcting the flagged branch leaves a pure Pauli with $\psi$ destroyed, if and only if the two branches are distinguishable, i.e.\ $X_i$ and $Y_i$ have distinct syndromes. Since $Y_i=iX_iZ_i$, this holds iff $Z_i$ has a nontrivial syndrome, i.e.\ iff $PZ_iP=0$. There are then three cases, not two. If $PZ_iP=0$ the offset is erased and $Z_i$ is detectable; if $PZ_iP=\lambda_iP$ with $\lambda_i\neq0$ the offset survives but qubit $i$ is blind to the metric, contributing nothing to $\mathcal{A}$; and only if $PZ_iP$ is non-scalar does the offset survive \emph{and} the qubit register, as a residual logical $R_Z(2\psi_i)$. Erasure therefore implies blindness, but blindness does not imply erasure.
Theorem~\ref{thm:tradeoff} is thus simultaneously the covariance--protection trade-off and the $\psi$-erasure criterion. In the phase-flip code $X_i\mapsto(+1,+1)$ and $Y_i\mapsto(-1,+1)$: every $\psi_i$ is erased, and the code is correspondingly signal-blind. In the bit-flip code $Z_i$ is logical, $X_i$ and $Y_i$ are syndrome-degenerate, and $\psi_i$ survives on every qubit, which is the unavoidable cost of $\mathcal{A}=n$. The practical mitigation is not a code but a rate: for hyperfine ground-state qubits the flip channel has no spontaneous component, so the flip probability over the hold is $p_X\simeq T_\pi/T_1$, where $T_\pi$ is the hold duration required to accumulate a relative phase of $\pi$ and $T_1$ is the population-relaxation time.

\section{Timing independence of a correction}\label{app:timing}

With $\theta_M(t)$ the phase accumulated by register $M$ relative to its local oscillator, $R_Z(\alpha)\equiv e^{-i\alpha Z/2}$ and $X_\beta\equiv R_Z(\beta)XR_Z(-\beta)$, $\beta_{e,c}=\theta_M(t_{e,c})$, and $\theta\equiv\theta_M(T)$ the final accumulated phase,
\begin{equation}
  R_Z(\theta{-}\beta_c)X_{\beta_c}R_Z(\beta_c{-}\beta_e)X_{\beta_e}R_Z(\beta_e)=R_Z(\theta),
  \label{eq:echoid}
\end{equation}
independently of $t_e$ and $t_c$. Two reflections about equatorial axes separated by $\beta_c-\beta_e$ compose into a rotation by $2(\beta_c-\beta_e)$, exactly the phase the qubit failed to accumulate while inverted. The decoder must therefore know \emph{which} qubit, never \emph{when}. It is the correction pulse's phase reference, not merely its inversion of the populations, that restores the signal: a correction about an axis fixed in coordinate time does not satisfy Eq.~\eqref{eq:echoid}.

\section{Derivations of survival and conditional gate fidelity}\label{app:survival}

This section supplies the intermediate steps behind the finite-angle robustness formulas quoted in the main text. Let
\begin{equation}
  U(\theta)\equiv e^{-i\theta\Wop/2},
  \qquad Q\equiv \Id-P,
\end{equation}
with $P$ the projector onto a one-logical-qubit code, so the logical-space dimension is $d_L\equiv\Tr P=2$. Suppose $\theta_0$ is an exact working point at which $U(\theta_0)$ preserves the code and induces a logical unitary $U_L$, i.e. $U(\theta_0)P=U_LP$. Because all $U(\theta)$ are generated by the same $\Wop$,
\begin{equation}
  PU(\theta_0+\epsilon)P
  =U_L\,\Gamma(\epsilon),
  \qquad
  \Gamma(\epsilon)\equiv Pe^{-i\epsilon\Wop/2}P .
  \label{eq:s5factor}
\end{equation}
Thus the ideal gate can be factored off and all miscalibration properties are determined by $\Gamma(\epsilon)$.

\paragraph{Code-space survival.}---For the maximally mixed logical input $P/d_L$, the survival probability is
\begin{equation}
  p_{\rm surv}(\epsilon)
  =\frac{1}{d_L}\operatorname{Tr}\!\left[\Gamma(\epsilon)^\dagger\Gamma(\epsilon)\right].
  \label{eq:s5survdef}
\end{equation}
Unitarity gives the complementary probability of leaving the code space as
\begin{equation}
  1-p_{\rm surv}
  =\frac{1}{d_L}\operatorname{Tr}\!\left[
    P e^{i\epsilon\Wop/2}Qe^{-i\epsilon\Wop/2}P
  \right].
\end{equation}
Since
\begin{equation}
  Qe^{-i\epsilon\Wop/2}P
  =-\frac{i\epsilon}{2}Q\Wop P+O(\epsilon^2),
\end{equation}
the leading reduction in code-space survival is
\begin{equation}
  p_{\rm surv}(\epsilon)
  =1-\kappa\epsilon^2+O(\epsilon^4),
  \qquad
  \kappa=\frac{1}{4d_L}\operatorname{Tr}\!\left[P\Wop Q\Wop P\right].
  \label{eq:s5kappa}
\end{equation}
The absence of odd powers follows also from $p_{\rm surv}(\epsilon)=p_{\rm surv}(-\epsilon)$. If every weight-one and weight-two $Z$ operator is detectable, then $P\Wop P=0$ and
$P\Wop^2P=nP$, because
$\Wop^2=n\Id+2\sum_{i<j}Z_iZ_j$. Equation~\eqref{eq:s5kappa} then reduces to
\begin{equation}
  \kappa=\frac{n}{4}.
\end{equation}
This gives $\kappa=3/4$ for the phase-flip $[[3,1,3]]$ code at $\theta_0=\pi$ and $\kappa=15/4$ for the $[[15,1,3]]$ Reed--Muller code at $\theta_0=\pi/4$.

\paragraph{Phase-flip code.}---For the phase-flip code the projected rotation is exactly
\begin{equation}
  \Gamma_{\rm pf}(\theta)=c_\theta^3\Id_L+i s_\theta^3\Xbar,
  \qquad
  c_\theta\equiv\cos(\theta/2),\quad s_\theta\equiv\sin(\theta/2),
  \label{eq:s5pfK}
\end{equation}
where $\Id_L$ is the logical identity. Since the two logical Pauli components are Hilbert--Schmidt orthogonal,
\begin{equation}
  \Gamma_{\rm pf}^\dagger\Gamma_{\rm pf}=(c_\theta^6+s_\theta^6)\Id_L,
  \qquad
  p_{\rm surv}=c_\theta^6+s_\theta^6.
  \label{eq:s5pfsurv}
\end{equation}
At $\theta=\pi+\epsilon$, $c_\theta=-\sin(\epsilon/2)$ and $s_\theta=\cos(\epsilon/2)$, giving
\begin{equation}
  p_{\rm surv}
  =1-\frac{3}{4}\epsilon^2+\frac14\epsilon^4+O(\epsilon^6),
\end{equation}
which is the $\kappa=3/4$ case of Eq.~\eqref{eq:s5kappa}.

For a postselected one-Kraus logical map $\rho\mapsto\Gamma\rho\Gamma^\dagger$, we quantify the conditional gate accuracy by the entanglement (process) fidelity with an ideal target logical unitary $U_{\rm tar}$,
\begin{equation}
  F_{\rm cond}(U_{\rm tar},\Gamma)
  \equiv
  \frac{\left|\operatorname{Tr}_L(U_{\rm tar}^\dagger\Gamma)\right|^2}
       {d_L\,\operatorname{Tr}_L(\Gamma^\dagger\Gamma)}.
  \label{eq:s5process}
\end{equation}
Here $\operatorname{Tr}_L$ denotes the trace over the logical code space. At $\theta_0=\pi$ the ideal target is $U_{\rm tar}=i\Xbar$ up to global phase. Equations~\eqref{eq:s5pfK} and \eqref{eq:s5process} give
\begin{equation}
\begin{aligned}
  F_{\Xbar}^{\rm cond}
  &=\frac{s_\theta^6}{c_\theta^6+s_\theta^6}\\
  &=1-\frac{\epsilon^6}{64}-\frac{\epsilon^8}{128}
    +O(\epsilon^{10})\\
  &=1-(\epsilon/2)^6+O(\epsilon^8).
\end{aligned}
  \label{eq:s5FX}
\end{equation}
This is the expression quoted in the main text. The quadratic reduction in survival and sixth-order conditional infidelity are therefore distinct quantities: most small miscalibrations produce a nontrivial syndrome and are rejected, while the accepted trivial-syndrome branch remains much closer to the ideal gate.

The same projected operator also explains the $\theta_0=\pi/2$ entry used in Appendix~\ref{app:collective}. At that point the normalized postselected gate is
$U_{\rm tar}^{(\sqrt{\Xbar})}=(\Id_L+i\Xbar)/\sqrt2$ and the success probability is only $p_{\rm surv}=1/4$. For $\theta=\pi/2+\epsilon$,
\begin{equation}
\begin{aligned}
  F_{\sqrt{\Xbar}}^{\rm cond}
  &=\frac{(c_\theta^3+s_\theta^3)^2}{2(c_\theta^6+s_\theta^6)}
    =1-\frac94\epsilon^2+O(\epsilon^4),\\
  p_{\rm surv}
  &=\frac14+\frac34\epsilon^2+O(\epsilon^4).
\end{aligned}
  \label{eq:s5FsqrtX}
\end{equation}
Thus $\pi/2$ is a postselected $\sqrt{\Xbar}$ point for this code, not a deterministic gate point. As another exact check used in Appendix~\ref{app:gatecond}, the $[[5,1,3]]$ code has only the identity and $Z^{\otimes5}$ among its $Z$-only normaliser elements; hence at $\theta=\pi/2$,
$p_{\rm surv}=c_\theta^{10}+s_\theta^{10}=1/16$.

\paragraph{Why the conditional infidelity scales as $\epsilon^{2d_Z}$.}---After factoring the exact logical gate as in Eq.~\eqref{eq:s5factor}, Theorem~\ref{thm:response} implies
\begin{equation}
\begin{aligned}
  \Gamma(\epsilon)&=a(\epsilon)\Id_L+\Gamma_{\rm ns}(\epsilon),\\
  \operatorname{Tr}_L \Gamma_{\rm ns}(\epsilon)&=0,
  \qquad \Gamma_{\rm ns}(\epsilon)=O(\epsilon^{d_Z}).
\end{aligned}
\end{equation}
where $a(\epsilon)$ collects the identity and stabilizer contributions and $\Gamma_{\rm ns}$ the non-scalar logical part. Taking $U_{\rm tar}=\Id_L$ in Eq.~\eqref{eq:s5process},
\begin{equation}
  1-F_{\rm cond}
  =\frac{\operatorname{Tr}_L(\Gamma_{\rm ns}^\dagger\Gamma_{\rm ns})}
  {d_L|a|^2+\operatorname{Tr}_L(\Gamma_{\rm ns}^\dagger\Gamma_{\rm ns})}
  =O(\epsilon^{2d_Z}).
  \label{eq:s5dz}
\end{equation}
For the $[[15,1,3]]$ Reed--Muller code, $d_Z=3$, so the conditional infidelity about the exact $\pi/4$ transversal gate begins at $O(\epsilon^6)$ even though the code-space survival falls quadratically according to Eq.~\eqref{eq:s5kappa}.

\paragraph{Small-angle overlap and quantum Fisher information.}---For completeness, the small-angle formula in the main text follows directly from the characteristic function of the generator. For a pure probe state $\ket{\Psi}$, let $G_{\rm sig}\equiv\Wop_B/2$ and $f_\Psi(\delta\theta)\equiv\langle\Psi|e^{-i\delta\theta G_{\rm sig}}|\Psi\rangle$. Expanding,
\begin{equation}
  f_\Psi(\delta\theta)
  =1-i\delta\theta\langle G_{\rm sig}\rangle
   -\frac{\delta\theta^2}{2}\langle G_{\rm sig}^2\rangle
   +O(\delta\theta^3),
\end{equation}
so
\begin{equation}
\begin{aligned}
  1-|f_\Psi(\delta\theta)|^2
  &=\delta\theta^2
    \big(\langle G_{\rm sig}^2\rangle-\langle G_{\rm sig}\rangle^2\big)
    +O(\delta\theta^4)\\
  &=\frac{\mathcal{F}_Q}{4}\delta\theta^2
    +O(\delta\theta^4).
\end{aligned}
\end{equation}
with $\mathcal{F}_Q=4\,\mathrm{Var}_{\Psi}(G_{\rm sig})$ for a pure probe. This also makes explicit why the small-angle state disturbance is probe-dependent even though syndrome blindness is not.

\section{The one-parameter family and its Fisher information}\label{app:family}

Concatenating an inner GHZ code on $b$ qubits inside an outer phase-flip code on $N_{\rm blk}\equiv n/b$ logical qubits gives, on $n$ physical qubits, $\mathcal{A}_{\mathrm{inner}}=b$, $d_X=b$, $d_Z=n/b$, where $d_X$ and $d_Z$ denote the minimum weights of $X$- and $Z$-type logical operators. By part (i) of Theorem~\ref{thm:tradeoff}, the inner GHZ layer is the unique $b$-qubit covariant encoding that achieves the maximal amplification $\mathcal{A}=b$; the outer layer supplies whatever protection is wanted. Moving along this sensitivity--protection interpolation is a change of \emph{which stabilizers are measured}, not a re-encoding: at $b=1$ it is the transversal Hadamard exchanging $\langle X_iX_j\rangle$ for $\langle Z_iZ_j\rangle$; for $b>1$ it is suspension of the outer $X$-type checks.

The same amplification can shorten a finite-angle gate excursion. Within each GHZ block the physical generator restricts to $b$ times a logical Pauli; after choosing the inner logical basis so that this Pauli is the rotation axis seen by the outer code, a physical phase $\delta\theta$ appears to the outer code as $b\,\delta\theta$. Hence, if the outer code admits an exact logical gate at angle $\theta_0$, the concatenated construction reaches that same gate when
\begin{equation}
  b\,\delta\theta=\theta_0,
  \qquad
  \delta\theta=\frac{\theta_0}{b}.
  \label{eq:ghz-gate-amplification}
\end{equation}
At fixed $\omega_0$ and $\Delta h$, the required hold time is therefore reduced by a factor $b$. This is a resource--protection trade-off rather than a free gain: each outer-code qubit now occupies $b$ physical qubits and the inner GHZ layer is maximally sensitive to phase noise along the amplified generator, which the outer layer must tolerate or correct.

Let $\ket{\overline{\pm}}=(\ket{0}^{\otimes b}\pm\ket{1}^{\otimes b})/\sqrt2$, so $\ket{0}_L=\ket{\overline{-}}^{\otimes N_{\rm blk}}$, $\ket{1}_L=\ket{\overline{+}}^{\otimes N_{\rm blk}}$. Write $\Wop_B=\sum_{j=1}^{N_{\rm blk}}\Wop^{(j)}$ with $\Wop^{(j)}$ the block sum. Since $\Wop^{(j)}\ket{\overline{\pm}}=b\ket{\overline{\mp}}$,
\begin{equation}
  \braket{\overline{\pm}|\Wop^{(j)}|\overline{\pm}}=0,
  \qquad
  \braket{\overline{\pm}|(\Wop^{(j)})^2|\overline{\pm}}=b^2.
\end{equation}
Blocks are uncorrelated on any product of $\ket{\overline{\pm}}$, so $\braket{\Wop_B}=0$ and $\braket{\Wop_B^2}=N_{\rm blk}b^2=bn$. For the logical Bell state $\ket{\Psi_0}=(\ket{0_L}_A\ket{1_L}_B+\ket{1_L}_A\ket{0_L}_B)/\sqrt2$ the cross terms vanish because $\braket{0_L|1_L}_A=0$, so the same moments hold, and with signal generator $G_{\rm sig}\equiv\Wop_B/2$,
\begin{equation}
  \mathcal{F}_Q=4\,\mathrm{Var}(G_{\rm sig})=\braket{\Wop_B^2}-\braket{\Wop_B}^2=b\,n.
\end{equation}
The endpoints $b=1$ and $b=n$ are the phase-flip code ($\mathcal{F}_Q=n$, full $Z$ protection, no amplification) and the GHZ code ($\mathcal{F}_Q=n^2$, no $Z$ protection, maximal amplification). The $b=1$ endpoint is therefore first-order insensitive in the protected code space, not exactly clock-off in the sense of the clock-off definition in the main text: its quantum Fisher information is nonzero and generic finite excursions are filtered rather than stationary.

\paragraph{Clock/no-clock extension.} At an admitted finite-angle logical gate, an eigenspace of the induced logical unitary is invariant up to a common phase and can therefore furnish an endpoint clock-off sector, while an orthogonal sector containing distinct logical eigenphases can be phase-sensitive. This differs from a continuously stationary no-clock state, whose physical state is invariant throughout the hold. A full proper-time-interference protocol, including the relation between these two notions and the nonlocal construction of Ref.~\cite{Borregaard2025}, lies beyond the scope of the present work.

That $n^2$ is a noiseless limit. A $b$-qubit block accumulates signal phase $b\,\delta\theta$ and Gaussian phase-noise variance $b^2\sigma_1^2$, where $\sigma_1^2$ is the single-qubit collective phase-noise variance over an interrogation. Its off-diagonal visibility is therefore $e^{-b^2\sigma_1^2/2}$, so the phase quantum Fisher information of one equatorial block is $b^2e^{-b^2\sigma_1^2}$; with $N_{\rm blk}=n/b$ independent blocks,
\begin{equation}
  \mathcal{F}_Q(\sigma_1)=n\,b\,e^{-b^2\sigma_1^2},
  \qquad
  b^\ast=\frac{1}{\sigma_1\sqrt2},
  \label{eq:noisyfisher}
\end{equation}
where the second expression is the continuous optimum obtained from $\partial_b\ln\mathcal{F}_Q=1/b-2b\sigma_1^2=0$. Relative to $b=1$,
$\mathcal{F}_Q(b^\ast)/\mathcal{F}_Q(1)=e^{\sigma_1^2-1/2}/(\sigma_1\sqrt2)$; hence the constrained optimum uses $b>1$ only for $\sigma_1<1/\sqrt2\simeq0.71$~rad, and for $\sigma_1\ll1$ the gain is $1/(\sigma_1\sqrt{2e})\simeq0.429/\sigma_1$. In an actual code family $b$ is an allowed integer divisor of $n$, so the nearest permitted value is chosen. With $T_2^{\mathrm{cm}}$ the common-mode dephasing time, $\sigma_1^2=2T/T_2^{\mathrm{cm}}$ and $T_\pi\propto1/\Delta h$, a larger separation is doubly favourable---at $T_2^{\mathrm{cm}}=10^3$~s, the continuous optima are $b^\ast=0.4,\,2.1,\,6.6$ at $\Delta h=450$~m, $10$~km and $100$~km---and $\sigma_1$ is the residual after the tracking of Appendix~\ref{app:collective}, not the raw dephasing. Estimating the redshift \emph{rate} with free interrogation time is the unfavourable case. Let $T$ be the interrogation duration and $T_{\mathrm{tot}}$ the total averaging time. With $\sigma_1^2=2T/T_2^{\mathrm{cm}}$, $T_{\mathrm{tot}}/T$ repetitions give
$\mathcal{F}_{Q,\mathrm{tot}}(T)=T_{\mathrm{tot}}nbT\exp[-2b^2T/T_2^{\mathrm{cm}}]$.
Setting $\partial_T\mathcal{F}_{Q,\mathrm{tot}}=0$ gives $T^\ast=T_2^{\mathrm{cm}}/2b^2$ and hence $\mathcal{F}_{Q,\mathrm{tot}}=T_{\mathrm{tot}}nT_2^{\mathrm{cm}}/2eb$, decreasing in $b$, so $b=1$ is optimal there~\cite{Huelga97}. This metrological optimum is distinct from the finite-angle gate use above: for an admitted outer-code gate, choosing $b>1$ can deliberately shorten the required physical excursion according to Eq.~\eqref{eq:ghz-gate-amplification}, provided the resulting resource and noise-sensitivity costs are acceptable.

\paragraph{Relation to the Hamiltonian-not-in-Lindblad-span (HNLS) criterion.}---Reference~\cite{ZZPJ18} shows that Heisenberg scaling is attainable by error correction iff the signal Hamiltonian lies outside the span of $\Id$, the Lindblad operators and their products. Writing $L_\ell$ for the Lindblad operators and $G_{\rm sig}$ for the signal generator, with $P$ the code projector as in the main text, the criterion requires $P L_\ell P\propto P$ together with $P G_{\rm sig}P\neq\text{const}\,P$. For $G_{\rm sig}=\Wop/2$ with $Z$-type jumps those two requirements are exactly the hypotheses of Theorem~\ref{thm:tradeoff}, and $\Wop$ lies in the span, so HNLS fails and no code attains Heisenberg scaling: $\mathcal{F}_Q=bn$ above is a noiseless quantum Fisher information, and the asymptotic scaling in interrogation time $t$ is $O(t)$ rather than $O(t^2)$. Theorem~\ref{thm:tradeoff} is that obstruction in stabilizer form. What differs is the use made of it. Reference~\cite{ZZPJ18} treats $P G_{\rm sig}P\propto P$ as failure; in the present setting it is the design goal for the storage mode, and at $\delta\theta=\pi$ the same suppression returns a deterministic gate---a regime outside an estimation framework linearised in the parameter, and one to which no Fisher information attaches.

% Circuit-level resource modelling omitted from this version; feasibility claims are intentionally structural rather than based on an independent-fault lifetime estimate.

\section{Collective phase: a coherent offset, not a lifetime}\label{app:collective}

Two quantities must be kept apart. Let $\tau_{\mathrm{QEC}}$ be the syndrome period and $p_L$ the logical failure probability per round. The \emph{logical lifetime} $T_L\simeq\tau_{\mathrm{QEC}}/p_L$ decides whether the state survives to $T_\pi$, and is set by the code. With $\Wop_B\equiv\sum_i Z_i$ on register $B$, the relational proper-time unitary is $\Urel=e^{-i\delta\theta\Wop_B/2}$ with $\delta\theta\equiv\omega_0\Delta\tau$, where $\omega_0$ is the proper transition frequency and $\Delta\tau$ the proper-time difference. The \emph{collective phase} $\phi$, entering as $e^{-i\phi\Wop_B/2}$, is not a lifetime at all: by Theorem~\ref{thm:tradeoff} it is the same one-parameter family as $\Urel$, so it displaces the gate phase rather than dephasing the logical qubit. It is a coherent systematic, and the relevant question is not how large it is but how well it is known.

For the phase-flip code, the conditional process fidelity at the deterministic $\delta\theta=\pi$ working point is $F_{\Xbar}^{\rm cond}=s_\delta^6/(c_\delta^6+s_\delta^6)$, with $c_\delta\equiv\cos(\delta\theta/2)$ and $s_\delta\equiv\sin(\delta\theta/2)$. Appendix~\ref{app:survival} derives this expression, the associated survival probability, and the corresponding postselected $\sqrt{\Xbar}$ fidelity about $\delta\theta=\pi/2$. Below, ``surv.'' denotes the code-space survival probability for the maximally mixed logical input. For a \emph{fixed} offset $\epsilon$ in $\delta\theta$:

\begin{center}
\begin{tabular}{ccccc}
\hline\hline
 & \multicolumn{2}{c}{$\delta\theta=\pi$} & \multicolumn{2}{c}{$\delta\theta=\pi/2$} \\
$\epsilon$ (rad) & $1-F_{\Xbar}^{\rm cond}$ & surv. & $1-F_{\sqrt{\Xbar}}^{\rm cond}$ & surv. \\
\hline
$0.01$ & $1.6\times10^{-14}$ & $0.9999$ & $2.2\times10^{-4}$ & $0.250$ \\
$0.10$ & $1.6\times10^{-8}$  & $0.9925$ & $2.2\times10^{-2}$ & $0.258$ \\
$0.1\pi$ & $1.6\times10^{-5}$ & $0.9284$ & $1.7\times10^{-1}$ & $0.322$ \\
$0.50$ & $2.8\times10^{-4}$  & $0.8276$ & $3.0\times10^{-1}$ & $0.422$ \\
$1.00$ & $2.6\times10^{-2}$  & $0.4689$ & $4.7\times10^{-1}$ & $0.781$ \\
\hline\hline
\end{tabular}
\end{center}

The two working points differ in kind. At $\delta\theta=\pi$ the identity amplitude $c_\delta^3$ vanishes, so the conditional process infidelity begins at sixth order: a fixed half-radian offset costs $2.8\times10^{-4}$, while most of the penalty appears as a reduced probability of obtaining the accepted trivial-syndrome outcome. At $\delta\theta=\pi/2$ the two logical amplitudes change oppositely at first order, so the postselected logical rotation angle is linearly sensitive to $\epsilon$ and its conditional infidelity begins at second order, $1-F_{\sqrt{\Xbar}}^{\rm cond}=9\epsilon^2/4+O(\epsilon^4)$ (Appendix~\ref{app:survival}). Let $\sigma_\phi^2$ denote the variance of the residual unmonitored phase; halving the hold halves $\sigma_\phi^2$, but the shorter working point is substantially more phase-sensitive: for a Gaussian residual with $\sigma_\phi=0.5$~rad, numerical averaging gives mean conditional infidelities $6.7\times10^{-3}$ for the $\Xbar$ gate and $0.205$ for the postselected $\sqrt{\Xbar}$ gate. The latter is preferable only when logical lifetime, rather than phase tracking, is the dominant constraint.

What costs is the \emph{unmonitored} part. Writing $\phi=\bar\phi+\tilde\phi$, with $\bar\phi$ the tracked component and $\tilde\phi$ the residual of variance $\sigma_\phi^2$, the quoted averages are obtained from
\begin{equation}
\begin{aligned}
  \langle F_{\Xbar}^{\rm cond}\rangle_{\sigma_\phi}
  &=\frac{1}{\sqrt{2\pi}\sigma_\phi}
    \int_{-\infty}^{\infty}d\epsilon\,
    e^{-\epsilon^2/(2\sigma_\phi^2)}\\
  &\quad\times
    \frac{\cos^6(\epsilon/2)}
         {\sin^6(\epsilon/2)+\cos^6(\epsilon/2)} .
\end{aligned}
\end{equation}
Numerical evaluation gives $\langle F_{\Xbar}^{\rm cond}\rangle=0.9999,\,0.993,\,0.923,\,0.673$ at $\sigma_\phi=0.25,\,0.5,\,0.84,\,1.59$~rad. Comparing the two tables: a random half-radian is $24$ times more costly than a fixed one, the whole difference being the tails. The requirement is therefore on tracking, not on stability: $\sigma_\phi\lesssim0.5$~rad for $\langle F_{\Xbar}^{\rm cond}\rangle>0.99$, corresponding to a fractional frequency residual of $7.8\times10^{-15}$ over $T_\pi=1267$~s.

The sources are known and each is addressed conventionally. On the $m_F=0\!\to\!m_F=0$ clock transition, with $m_F$ the magnetic-sublevel quantum number, the first-order Zeeman shift vanishes, leaving the second-order Zeeman, local-oscillator and blackbody shifts. The first is deterministic given the magnetic field $B_{\rm mag}(t)$ and its sensitivity scales with the bias field, so magnetometry plus operation at low bias converts it from a random to a known offset; the second enters only through the $A$--$B$ relative phase and is tracked against an auxiliary stabilised transfer reference used for calibration rather than to drive the local QEC circuits; the third is a temperature-stability requirement. None is improved by encoding, and none is a limit of the code: the two axes are independent.

Structurally, $\bar\phi$ separates from the signal without needing to be small. The gravitational phase is proportional to $\Delta h$ and $\bar\phi$ is not, so repeating the protocol at several separations isolates the metric contribution as the slope and absorbs $\bar\phi$ into the intercept---the method by which the redshift was established across $33$~cm~\cite{Chou10}, a millimetre~\cite{Bothwell22} and $450$~m~\cite{Takamoto20}.

Two remarks on what does \emph{not} help. First, $\phi$ is per-register: at metre-to-kilometre separations the two modules' field, trap and thermal environments are independent, and the local oscillator is the one correlated component. Second, the entangled state gives no protection. The logical Bell state $\ket{\Psi^+}\equiv(\ket{0_L1_L}+\ket{1_L0_L})/\sqrt2$ is decoherence-free for collective \emph{logical} $\Zbar$, but for the phase-flip code $P\Urel P=c_\delta^3P+i s_\delta^3\Xbar$, so a collective $\Wop$ rotation generates $\Xbar$; $\ket{\Psi^+}$ is an eigenstate of $\Xbar_A\Xbar_B$ but not of $\Xbar_A+\Xbar_B$. The protected axis is therefore the wrong one, which is Theorem~\ref{thm:tradeoff} once more.

\bibliographystyle{apsrev4-2}
\bibliography{relational_invariance_arxiv_v36}

@article{Gundogan2026,
  author = {G{\"u}ndo\u{g}an, Mustafa and Barzel, Roy and R{\"a}tzel, Dennis},
  title = {Gravitational time dilation in quantum clock interferometry with entangled multi-photon states and quantum memories},
  journal = {Quantum Sci. Technol.},
  volume = {11},
  number = {4},
  pages = {045047},
  year = {2026},
  doi = {10.1088/2058-9565/aea2c3}
}

@article{Barzel2024,
  author  = {Barzel, Roy and G{\"u}ndo\u{g}an, Mustafa and Krutzik, Markus and R{\"a}tzel, Dennis and L{\"a}mmerzahl, Claus},
  title   = {Entanglement dynamics of photon pairs and quantum memories in the gravitational field of the earth},
  journal = {Quantum},
  volume  = {8},
  pages   = {1273},
  year    = {2024},
  doi     = {10.22331/q-2024-02-29-1273}
}

@article{Covey2025,
  author  = {Covey, Jacob P. and Pikovski, Igor and Borregaard, Johannes},
  title   = {Probing Curved Spacetime with a Distributed Atomic Processor Clock},
  journal = {PRX Quantum},
  volume  = {6},
  pages   = {030310},
  year    = {2025},
  doi     = {10.1103/q188-b1cr}
}

@article{Yu2026,
  author  = {Yu, Haocun and Macri, Dorotea and Morling, Thomas and Polini, Eleonora and Mieling, Thomas B. and Barrow, Peter and Kabag{\"o}z, Beg{\"u}m and Yin, Xinghui and Chru{\'s}ciel, Piotr T. and Hilweg, Christopher and Oelker, Eric and Mavalvala, Nergis and Walther, Philip},
  title   = {50-km Fiber Interferometer for Testing Gravitational Signatures in Quantum Interference},
  journal = {Phys. Rev. Lett.},
  volume  = {136},
  pages   = {110803},
  year    = {2026},
  doi     = {10.1103/yvlk-t1j9}
}

@article{PoundRebka60,
  author  = {Pound, R. V. and Rebka, G. A.},
  title   = {Apparent Weight of Photons},
  journal = {Phys. Rev. Lett.},
  volume  = {4},
  pages   = {337--341},
  year    = {1960},
  doi     = {10.1103/PhysRevLett.4.337}
}

@article{Chou10,
  author  = {Chou, C. W. and Hume, D. B. and Rosenband, T. and Wineland, D. J.},
  title   = {Optical Clocks and Relativity},
  journal = {Science},
  volume  = {329},
  number  = {5999},
  pages   = {1630--1633},
  year    = {2010},
  doi     = {10.1126/science.1192720}
}

@article{Takamoto20,
  author  = {Takamoto, Masao and Ushijima, Ichiro and Ohmae, Noriaki
             and Yahagi, Toshihiro and Kokado, Kensuke and Shinkai, Hisaaki
             and Katori, Hidetoshi},
  title   = {Test of general relativity by a pair of transportable optical lattice clocks},
  journal = {Nat. Photonics},
  volume  = {14},
  pages   = {411--415},
  year    = {2020},
  doi     = {10.1038/s41566-020-0619-8}
}

@article{Bothwell22,
  author  = {Bothwell, Tobias and Kennedy, Colin J. and Aeppli, Alexander
             and Kedar, Dhruv and Robinson, John M. and Oelker, Eric
             and Staron, Alexander and Ye, Jun},
  title   = {Resolving the gravitational redshift across a millimetre-scale atomic sample},
  journal = {Nature},
  volume  = {602},
  pages   = {420--424},
  year    = {2022},
  doi     = {10.1038/s41586-021-04349-7}
}

@article{ZCPB11,
  author  = {Zych, Magdalena and Costa, Fabio and Pikovski, Igor and Brukner, {\v C}aslav},
  title   = {Quantum interferometric visibility as a witness of general relativistic proper time},
  journal = {Nat. Commun.},
  volume  = {2},
  pages   = {505},
  year    = {2011},
  doi     = {10.1038/ncomms1498}
}

@article{Pikovski15,
  author  = {Pikovski, Igor and Zych, Magdalena and Costa, Fabio and Brukner, {\v C}aslav},
  title   = {Universal decoherence due to gravitational time dilation},
  journal = {Nat. Phys.},
  volume  = {11},
  pages   = {668--672},
  year    = {2015},
  doi     = {10.1038/nphys3366}
}

@article{Loriani19,
  author  = {Loriani, Sina and Friedrich, Alexander and Ufrecht, Christian
             and Di Pumpo, Fabio and Kleinert, Stephan and Abend, Sven
             and Gaaloul, Naceur and Meiners, Christian and Schubert, Christian
             and Tell, Dorothee and Wodey, {\'E}tienne and Zych, Magdalena
             and Ertmer, Wolfgang and Roura, Albert and Schlippert, Dennis
             and Schleich, Wolfgang P. and Rasel, Ernst M. and Giese, Enno},
  title   = {Interference of clocks: A quantum twin paradox},
  journal = {Sci. Adv.},
  volume  = {5},
  number  = {10},
  pages   = {eaax8966},
  year    = {2019},
  doi     = {10.1126/sciadv.aax8966}
}

@article{Roura20,
  author  = {Roura, Albert},
  title   = {Gravitational Redshift in Quantum-Clock Interferometry},
  journal = {Phys. Rev. X},
  volume  = {10},
  pages   = {021014},
  year    = {2020},
  doi     = {10.1103/PhysRevX.10.021014}
}

@article{Borregaard2025,
  author  = {Borregaard, Johannes and Pikovski, Igor},
  title   = {Testing quantum theory on curved spacetime with quantum networks},
  journal = {Phys. Rev. Res.},
  volume  = {7},
  pages   = {023192},
  year    = {2025},
  doi     = {10.1103/PhysRevResearch.7.023192}
}

@article{Sorci26,
  author  = {Sorci, Gabriel and Foo, Joshua and Leibfried, Dietrich and Sanner, Christian and Pikovski, Igor},
  title   = {Quantum Signatures of Proper Time in Optical Ion Clocks},
  journal = {Phys. Rev. Lett.},
  volume  = {136},
  pages   = {163602},
  year    = {2026},
  doi     = {10.1103/qhj9-pc2b}
}

@article{LeMaitre25,
  author  = {LeMaitre, Philip A. and Perche, T. Rick and Krumm, Marius and Briegel, Hans J.},
  title   = {Universal Quantum Computer from Relativistic Motion},
  journal = {Phys. Rev. Lett.},
  volume  = {134},
  pages   = {190601},
  year    = {2025},
  doi     = {10.1103/PhysRevLett.134.190601}
}

@article{Preskill00,
  author       = {Preskill, John},
  title        = {Quantum clock synchronization and quantum error correction},
  journal      = {arXiv preprint},
  eprint       = {quant-ph/0010098},
  archivePrefix= {arXiv},
  year         = {2000}
}

@article{KL97,
  author  = {Knill, Emanuel and Laflamme, Raymond},
  title   = {Theory of quantum error-correcting codes},
  journal = {Phys. Rev. A},
  volume  = {55},
  pages   = {900--911},
  year    = {1997},
  doi     = {10.1103/PhysRevA.55.900}
}

@article{EK09,
  author  = {Eastin, Bryan and Knill, Emanuel},
  title   = {Restrictions on Transversal Encoded Quantum Gate Sets},
  journal = {Phys. Rev. Lett.},
  volume  = {102},
  pages   = {110502},
  year    = {2009},
  doi     = {10.1103/PhysRevLett.102.110502}
}

@article{Faist20,
  author  = {Faist, Philippe and Nezami, Sepehr and Albert, Victor V.
             and Salton, Grant and Pastawski, Fernando and Hayden, Patrick
             and Preskill, John},
  title   = {Continuous Symmetries and Approximate Quantum Error Correction},
  journal = {Phys. Rev. X},
  volume  = {10},
  pages   = {041018},
  year    = {2020},
  doi     = {10.1103/PhysRevX.10.041018}
}

@article{WA20,
  author  = {Woods, Mischa P. and Alhambra, {\'A}lvaro M.},
  title   = {Continuous groups of transversal gates for quantum error correcting codes from finite clock reference frames},
  journal = {Quantum},
  volume  = {4},
  pages   = {245},
  year    = {2020},
  doi     = {10.22331/q-2020-03-23-245}
}

@article{HNPS,
  author  = {Hayden, Patrick and Nezami, Sepehr and Popescu, Sandu and Salton, Grant},
  title   = {Error Correction of Quantum Reference Frame Information},
  journal = {PRX Quantum},
  volume  = {2},
  pages   = {010326},
  year    = {2021},
  doi     = {10.1103/PRXQuantum.2.010326}
}

@article{Huelga97,
  author  = {Huelga, S. F. and Macchiavello, C. and Pellizzari, T.
             and Ekert, A. K. and Plenio, M. B. and Cirac, J. I.},
  title   = {Improvement of Frequency Standards with Quantum Entanglement},
  journal = {Phys. Rev. Lett.},
  volume  = {79},
  pages   = {3865--3868},
  year    = {1997},
  doi     = {10.1103/PhysRevLett.79.3865}
}

@article{ZZPJ18,
  author  = {Zhou, Sisi and Zhang, Mengzhen and Preskill, John and Jiang, Liang},
  title   = {Achieving the Heisenberg limit in quantum metrology using quantum error correction},
  journal = {Nat. Commun.},
  volume  = {9},
  pages   = {78},
  year    = {2018},
  doi     = {10.1038/s41467-017-02510-3}
}

@article{BASESTEP25,
  author  = {Leonhardt, M. and Schweitzer, D. and Abbass, F. and Anjum, K. K. and Arndt, B. and Erlewein, S. and Endoh, S. and Geissler, P. and Imamura, T. and J{\"a}ger, J. I. and Latacz, B. M. and Micke, P. and Voelksen, F. and Yildiz, H. and Blaum, K. and Devlin, J. A. and Matsuda, Y. and Ospelkaus, C. and Quint, W. and Soter, A. and Walz, J. and Yamazaki, Y. and Ulmer, S. and Smorra, C.},
  title   = {Proton transport from the antimatter factory of CERN},
  journal = {Nature},
  volume  = {641},
  pages   = {871--875},
  year    = {2025},
  doi     = {10.1038/s41586-025-08926-y}
}

@article{Diosi89,
  author  = {Di{\'o}si, L.},
  title   = {Models for universal reduction of macroscopic quantum fluctuations},
  journal = {Phys. Rev. A},
  volume  = {40},
  pages   = {1165--1174},
  year    = {1989},
  doi     = {10.1103/PhysRevA.40.1165}
}

@article{Penrose96,
  author  = {Penrose, Roger},
  title   = {On gravity's role in quantum state reduction},
  journal = {Gen. Relativ. Gravit.},
  volume  = {28},
  pages   = {581--600},
  year    = {1996},
  doi     = {10.1007/BF02105068}
}

@article{Bassi13,
  author  = {Bassi, Angelo and Lochan, Kinjalk and Satin, Seema
             and Singh, Tejinder P. and Ulbricht, Hendrik},
  title   = {Models of wave-function collapse, underlying theories, and experimental tests},
  journal = {Rev. Mod. Phys.},
  volume  = {85},
  pages   = {471--527},
  year    = {2013},
  doi     = {10.1103/RevModPhys.85.471}
}
\end{document}